\documentclass[12pt]{article}
\usepackage{amsmath}
\usepackage{graphicx,psfrag,epsf}
\usepackage{enumerate}
\usepackage{natbib}

\usepackage{amsthm}
\usepackage{amssymb}
\usepackage{bm}
\usepackage{dsfont}
\usepackage{caption}
\usepackage[labelformat=simple]{subcaption}

\usepackage{tabularx}
\usepackage{tabu}
\usepackage{pdfpages}
\usepackage{array}
\usepackage{longtable}
\usepackage{pdflscape}

\usepackage{url}
\DeclareMathOperator*{\argmax}{arg\,max}

\usepackage{tikz}
\tikzset{
  treenode/.style = {shape = circle, align = center, inner sep = 0pt, minimum size = 0.2cm},
  root/.style     = {treenode},
  env/.style      = {treenode, font = \normalsize},
  dummy/.style    = {treenode}
}

\definecolor{deep-gray}{gray}{0.25}

\newcommand{\beginsupplement}{%
\setcounter{table}{0}
\renewcommand{\thetable}{S\arabic{table}}%
\setcounter{figure}{0}
\renewcommand{\thefigure}{S\arabic{figure}}%
\setcounter{subsection}{0}
\renewcommand{\thesubsection}{S\arabic{subsection}}%
\setcounter{equation}{0}
\renewcommand{\theequation}{S\arabic{equation}}%
}

\theoremstyle{plain}
\newtheorem{theorem}{Theorem}
\newtheorem{lemma}{Lemma}
\newtheorem{corollary}{Corollary}
\theoremstyle{definition}
\newtheorem{definition}{Definition}
\theoremstyle{remark}
\newtheorem{remark}{Remark}

\newcommand{\blind}{0}

\addtolength{\oddsidemargin}{-.75in}%
\addtolength{\evensidemargin}{-.75in}%
\addtolength{\textwidth}{1.5in}%
\addtolength{\textheight}{1.3in}%
\addtolength{\topmargin}{-.8in}%

\def\sobm {\mathcal{Z}}
\def\zsobm {\mathcal{Z}^0}
\def\tsobm {\tilde{\mathcal{Z}}}
\def\bx {\bm x}
\def\bz {\bm z}
\def\bpi {\bm \pi}

\def\tree {\mathcal{T}}
\newcommand{\ie}{\textit{i.e.}}
\newcommand{\eg}{\textit{e.g.}}

\begin{document}

\def\spacingset#1{\renewcommand{\baselinestretch}%
{#1}\small\normalsize} \spacingset{1}


\if0\blind
{
  \title{\bf Posterior Contraction Rate of Sparse Latent Feature Models with Application to Proteomics}
  \author{Tong Li\thanks{Department of Statistics, Columbia University} , Tianjian Zhou\thanks{Department of Public Health Sciences, University of Chicago} , Kam-Wah Tsui\thanks{Department of Statistics, University of Wisconsin--Madison} , Lin Wei\thanks{Research Institute, NorthShore University HealthSystem} , and Yuan Ji\footnotemark[2] }
  \maketitle
} \fi

\if1\blind
{
  \bigskip
  \bigskip
  \bigskip
  \begin{center}
    {\LARGE\bf Title}
\end{center}
  \medskip
} \fi

\bigskip
\begin{abstract}
The Indian buffet process (IBP) and phylogenetic Indian buffet process (pIBP) can be used as prior models to infer latent features in a data set. The theoretical properties of these models are under-explored, however, especially in high dimensional settings. In this paper, we show that under mild sparsity condition, the posterior distribution of the latent feature matrix, generated via IBP or pIBP priors, converges to the true latent feature matrix asymptotically. We derive the posterior convergence rate, referred to as the contraction rate. We show that the convergence holds even when the dimensionality of the latent feature matrix increases with the sample size, therefore making the posterior inference valid in high dimensional setting. We demonstrate the theoretical results using computer simulation, in which the parallel-tempering Markov chain Monte Carlo method is applied to overcome computational hurdles. The practical utility of the derived properties is demonstrated by inferring the latent features in a reverse phase protein arrays (RPPA) dataset under the IBP prior model. Software and dataset reported in the manuscript are provided at \url{http://www.compgenome.org/IBP}.
\end{abstract}

\noindent%
{\it Keywords:}  High dimension; Indian buffet process; Latent feature; Markov chain Monte Carlo; Posterior convergence; Reverse phase protein arrays

\spacingset{1.45}

\section{Introduction}

The latent feature models are concerned about finding latent structures in a data set $X_{n \times p}$ where each row $\bx_i=(x_{i1}, \cdots, x_{ip})$ represents a single observation of $p$ objects and $n$ is the sample size. 
We consider the case where the number of objects $p = p_n$ increases as sample size $n$ increases.
The goal is to explain the variability of the observed data with a latent binary feature matrix $Z_{p \times K}$ where each column of $Z$ represents a latent feature that includes a subset of the $p$ objects. The number of latent features $K$ is unknown and is inferred as well.

Bayesian nonparametric latent feature models such as the Indian buffet process (IBP) \citep{griffiths2006infinite, griffiths2011indian} can be used to define the prior distribution of the binary latent feature matrix with arbitrarily many columns. In many applications (such as \citealt{chu2006identifying}) these priors could lead to desirable posterior inference. An important property of IBP is that the corresponding distribution maintains exchangeability across the rows that index the experimental units, making posterior inference relatively simple and easy to implement.
However, sometimes the rows of the latent feature matrix must follow a group structure, such as in phylogenetic inferences. To address such needs, 
the phylogenetic Indian Buffet Process (pIBP) \citep{miller2012phylogenetic} has been developed to allow
different rows to be partially exchangeable.

Despite the increasing popularity in the application of IBP and pIBP prior models, such as in cancer and evolutional genomics, few 
theoretical results have been discussed on the posterior inference based on these models. For example, from a frequentist view, it is important to investigate the  asymptotic convergence of the posterior distribution of the latent feature matrix under IBP and pIBP priors.  Existing literature on the theory of Bayesian posterior consistency includes, for example, \cite{schwartz1965bayes, barron1999consistency} and \cite{ghosal2000convergence}. 
\cite{chen2016posterior}  is a motivational work exploring  theoretical properties of the posterior distribution of the latent feature matrix based on IBP or pIBP priors. They explored the asymptotic behavior of the IBP or pIBP-based posterior inference, where the sample size $n$ increases in a much faster speed than 
the number of objects $p_n$, i.e., the dimensionality of $Z$.
This might be hard to achieve in some real applications. We consider important extensions based on \cite{chen2016posterior}. In particular, we consider properties of posterior inference based on IBP and pIBP priors in high dimensions and with sparsity. 
Under a similar high dimensional and sparse setting, a related work is \cite{pati2014posterior}, where the authors studied the asymptotic behavior of sparse Bayesian factor models discussed in \cite{west2003bayesian}. These models are concerned about continuous latent features, which are different from the binary feature models like IBP and pIBP.

High dimensional inference is now routinely needed in many applications, such as genomics and proteomics. Due to the reduced cost of high-throughput  biological experiments (e.g., next-generation sequencing), a number of genomics elements (such as genes) can be measured with a relatively short amount of time and low cost for a large number of patients. In our application, the number of genomics elements $n$ is the sample size, the number of patients $p$ is the number of rows, or the number of objects in the latent feature matrix, and the number of latent features $K$ is assumed unknown. Depending on the particular research question, in some applications, genes can be the objects and patients can be the samples.
When $p = p_n$ becomes large relative to $n$, the sparsity of the feature matrix critically ensures the efficiency and validity of statistical inference. 
We will show that under the sparsity condition, the requirement for posterior convergence can be relaxed from $p_n^3=o(n)$ \citep{chen2016posterior}  to $p_n(\log p_n)^2 = o(n)$ (see Remark \ref{rem-main}).
 
The proposed  sparsity condition can be reasonably interpreted
and checked in practice. 
For example, in genomics and proteomics applications, our
sparsity condition means that the number of features shared by different patients is small, \ie, the patients are heterogeneous.
This is different from some published sparsity conditions that involve
more complicated mathematical expressions, possibly in terms of the
properties of complex matrices, which are  difficult to check in
real-world applications. 

The rest of the paper is organized as follows. Section \ref{sec:framework} introduces the latent feature model and the IBP/pIBP priors. Section \ref{sec:rate} establishes the posterior contraction rate of sparse latent feature models under IBP and pIBP prior, which is the main theoretical result of this paper. Section \ref{sec:MCMC} proposes an efficient posterior inference scheme based on Markov chain Monte Carlo (MCMC) simulations. Section \ref{sec:exap} provides both simulated and real-world proteomics examples that support the theoretical derivations.
We conclude the paper with a brief discussion in Section \ref{sec:discussion}. Some technical details are provided in the supplement.

\section{Notation and Probability Framework}
\label{sec:framework}
In this section, we first introduce some notation, 
and then specify the hierarchical model including the sampling model and the prior model.
In particular, the sampling model is the latent feature model, and the prior model is the IBP mixture or pIBP mixture.

\subsection{Notation}
Throughout the paper, we denote by $p(\cdot)$ and $P(\cdot)$ probability density functions (pdf) and probability mass functions (pmf), respectively. Specifically for the latent feature matrix $Z$, we use $\Pi(Z)$ and $\Pi(Z \mid X)$ to denote the prior and posterior distribution of $Z$, respectively. The likelihood $p(X \mid Z) = \prod_{i = 1}^n p(\bx_i^T \mid Z)$. For two sequences $a_n$ and $b_n$, the notation $a_n = O(b_n)$ means there exists a positive real number $C$ and a constant $n_0$ such that $a_n \leq C b_n$ for all $n \geq n_0$; the notation $a_n = o(b_n)$ means for every positive real number $\varepsilon$ there exists a constant $n_0$ such that $a_n \leq \varepsilon b_n$ for all $n \geq n_0$. For a matrix $A$, $\| A \|$ denotes the spectral norm defined as the largest singular value of $A$. Finally, $C$ is a generic notation for positive constants whose value might change depending on the context but is independent from other quantities.

\subsection{Latent Feature Model}
Suppose that $X_{n\times p}$ is a collection of the observed data. Each row $\bx_i=(x_{i1},\cdots,x_{ip})$ represents a single observation of $p$ objects, for $i=1, \cdots, n$, where $\bx_i$'s are independent. Assume that the mechanism of generating $X$ can be characterized by latent features, 
\begin{align}
X^T = Z A + E.
\label{eq:lfm}
\end{align}
Here $Z = (z_{jk})_{p \times K}$ denotes the latent binary feature matrix, each entry $z_{jk} \in \{0,1\}$ represents object $j$ possesses feature $k$ ($z_{jk} = 1$) or not ($z_{jk} = 0$), respectively. The loading matrix $A = (a_{ki})_{K \times n}$, with each entry being the contribution of the $k$-feature to the $i$-th observation. We assume $a_{ki} \stackrel{i.i.d}{\sim} N(0,\sigma_a^2)$.
The error matrix $E = (e_{ji})_{p \times n}$, where $e_{ji}$'s are independent Gaussian errors, $e_{ji} \stackrel{i.i.d}{\sim} N(0,\sigma^2)$.

After integrating out $A$, we obtain for each observation of the $p$ objects,
\begin{align*}
\bx_i^{T} \stackrel{i.i.d}{\sim} N_p(\bm 0, \sigma_a^2 ZZ^T+\sigma^2 \bm{I}_p),
\end{align*}
where $N_p$ is a $p$-variate Gaussian distribution.
Therefore, the conditional distribution of $X$ given $Z$ is
\begin{align}
p(X|Z)=\prod\limits_{i=1}^n\frac{1}{\sqrt{(2\pi)^p\det\left( \sigma_a^2 ZZ^T + \sigma^2 \bm{I}_p\right)}}\exp\left(-\frac{1}{2}\bx_i\left(\sigma_a^2 ZZ^T + \sigma^2 \bm{I}_p\right)^{-1}\bx_i^T\right).
\label{eq:sampling_model}
\end{align}

Without loss of generality, we always assume that $\sigma^2=\sigma_a^2=1$.
One of the primary interests is to conduct appropriate estimation on $ZZ^T$, which is usually called the \emph{similarity matrix} since each entry of $ZZ^T$ is the number of features shared by two objects.

\subsection{Prior Distributions Based on IBP and pIBP}

In the latent feature model (Equation \ref{eq:lfm}), it remains to specify the prior for the binary feature matrix $Z$. IBP and pIBP are popular prior choices on binary matrices with an unbounded number of columns. 
IBP assumes exchangeability among the objects, while
pIBP introduces dependency among the entries of the $k$-th column of $Z$ through a rooted tree $\tree$. See Figure \ref{fig:eg_tree} for an example of the tree. IBP is a special case of pIBP when the root node is the only internal node of the tree.
The construction and the pmf of IBP are described and derived in detail in \cite{griffiths2011indian}. For pIBP, only a brief definition is given in \cite{miller2012phylogenetic}. 
For the proof of the main theoretical result of this paper, we propose a construction of pIBP in a similar way as IBP and derive the pmf of pIBP.

We first introduce some notation.
Let $\sobm_{p \times K}$ denote the collection of binary matrices with $p$ rows and $K$ ($ K \in \mathbb{N}^+$) columns such that none of these columns consist of all 0's. 
Let $\sobm_{p} = \cup_{K = 1}^{\infty} \sobm_{p \times K}$ and let $\mathbf{0}$ denote a $p$-dimensional vector whose elements are all 0's, \ie, 
$$
\mathbf{0}=(0,0,\cdots,0)^T.
$$
In the following sections, we also regard $\mathbf{0}$ as a $p\times 1$ matrix when needed.  Both IBP and pIBP are defined over $\zsobm_p \triangleq \{\mathbf{0}\}\bigcup \sobm_p$. 
It can be shown that with probability 1, a draw from IBP or pIBP has only finitely many columns. 
For the construction of IBP and pIBP, we introduce some more notations as follows. 
Denote by $\tsobm_{p \times \tilde{K}}$ the collection of all binary matrices with $p$ rows and $\tilde{K}$ columns (where the columns can be all zeros). 
We define a many-to-one mapping $G( \cdot ) : \tsobm_{p \times \tilde{K}} \rightarrow \zsobm_{p}$. For a binary matrix $\tilde{Z} \in \tsobm_{p \times \tilde{K}}$, if all columns of $\tilde{Z}$ are $\bm 0$'s, then $G(\tilde{Z}) = \bm 0$; otherwise, $G(\tilde{Z})$ is obtained by deleting all zero columns of $\tilde{Z}$. For the purpose of performing inference on the similarity matrix $ZZ^T$, it suffices to focus on the set of equivalence classes induced by $G(\cdot)$. Two matrices $\tilde{Z}_1, \tilde{Z}_2 \in \tsobm_{p \times \tilde{K}}$ are $G$-equivalent if $G(\tilde{Z}_1) = G(\tilde{Z}_2)$, and in this case the similarity matrices induced by $\tilde{Z}_1$ and $\tilde{Z}_2$ are the same, $\tilde{Z}_1 \tilde{Z}_1^T = \tilde{Z}_2 \tilde{Z}_2^T$.

We now turn to a constructive definition of pIBP. The pIBP can be constructed in the following three steps by taking the limit of a finite feature model.

\noindent\textbf{Step 1.}
Given some hyperparameter $\alpha$ and tree structure $\tree$, we start by defining a probability distribution $P_{\tilde{K}}(\tilde{Z} \mid \alpha, \tree)$ over $\tilde{Z} \in \tsobm_{p \times \tilde{K}}$. Denote by $\tilde{Z} = (\tilde{\bm z}_1, \tilde{\bm z}_2,\cdots,\tilde{\bm z}_{\tilde{K}})$.
Let $\bpi=\{\pi_1,\pi_2,\cdots,\pi_{\tilde{K}}\}$ be a vector of success probabilities such that $(\pi_k \mid \alpha) \stackrel{i.i.d.}{\sim} \text{Beta} ( \alpha/ \tilde{K}, 1 )$. The columns of $\tilde{Z}$ are conditionally independent given $\bpi$ and $\tree$,
\begin{align*}
P_{\tilde{K}}(\tilde{Z}  \mid  \bpi, \tree) = \prod \limits_{k=1}^{\tilde{K}}P(\tilde{\bm z}_k \mid \pi_k, \tree),
\end{align*}
where $P(\tilde{\bm z}_k \mid \pi_k, \tree)$ is determined as in \cite{miller2012phylogenetic} (more details in Supplementary Section \ref{app:deriv_pibp}). 
If $P(\tilde{\bm z}_k \mid \pi_k, \tree) = \prod_{j = 1}^p P(\tilde{z}_{jk} \mid \pi_k)$ where $\tilde{z}_{jk} \stackrel{i.i.d}{\sim} \text{Bernoulli}(\pi_k)$, pIBP reduces to IBP.
The marginal probability of $\tilde{Z}$ is
\begin{align*}
P_{\tilde{K}}(\tilde{Z}  \mid  \alpha, \tree) = \prod \limits_{k=1}^{\tilde{K}} \int P(\tilde{\bm z}_k \mid \pi_k, \tree) p(\pi_k \mid \alpha) d \pi_k.
\end{align*}

\noindent\textbf{Step 2.} 
Next, for any $Z \in \zsobm_p$ with $K$ columns, we define a probability distribution (for $\tilde{K} \geq K$)
\begin{align*}
\Pi_{\tilde{K}}(Z  \mid  \alpha, \tree) \triangleq \sum_{\tilde{Z} \in \tsobm_{p \times \tilde{K}}: G(\tilde{Z}) = Z} P_{\tilde{K}} (\tilde{Z} \mid  \alpha, \tree),
\end{align*}
for $P_{\tilde{K}}(\tilde{Z}  \mid  \alpha, \tree)$ defined in Step 1. That is, we collapse all binary matrices in $\tsobm_{p \times \tilde{K}}$ that are $G$-equivalent.

\noindent\textbf{Step 3.}
Finally, for any $Z \in \zsobm_p$, define 
\begin{align*}
\Pi(Z \mid \alpha, \tree) \triangleq \lim\limits_{  \tilde{K}\rightarrow \infty} \Pi_{\tilde{K}}  (Z \mid \alpha, \tree).
\end{align*}
Here $\Pi(Z \mid \alpha, \tree)$ is the pmf of pIBP under $G$-equivalence classes.

Based on the three steps of constructing pIBP, we derive the pmf of pIBP given $\alpha$ and $\tree$.
Details on the derivation is given in Supplementary Section \ref{app:deriv_pibp}. 
Let $S(\tree)$ denote the total edge lengths of the tree structure (see Figure \ref{fig:eg_tree}) and $\psi(\cdot)$ denote the digamma function. For $Z\in \zsobm_p$, we have
\begin{align}
\Pi(Z \mid \alpha, \tree)=\begin{cases}
\exp \left\{-\left( \psi \left( S(\tree)+1\right) -\psi(1)\right) \alpha \right\}, & \text{if } Z = \bm{0};\\
\exp\left\lbrace -\left( \psi\left( S(\tree)+1\right) -\psi(1)\right) \alpha\right\rbrace \ \frac{\alpha^{K}}{K!}\ \prod\limits_{k=1}\limits^{K}\lambda_k, & \text{if } Z \in \sobm_{p \times K},
\end{cases}
\label{eq:pmf-pIBP}
\end{align}
where $\lambda_k \triangleq \lambda(\bm{z}_k,\tree) = \int_0^1 P(\bm{z}_k \mid \pi_k, \tree) \pi^{-1}_k d \pi_k$ (See Supplementary Section \ref{app:deriv_pibp}) and $\bm{z}_k$ is the $k$-th column of $Z$. 

Assume $\alpha \sim \text{Gamma}(1, 1)$. After integrating out $\alpha$, we obtain the pmf of pIBP mixture,
\begin{align*}
\Pi(Z \mid \tree)=\begin{cases}
\Big(\psi(S(\tree)+1)-\psi(1)+1\Big) ^{-1}, & \text{if } Z = \mathbf{0};\\
\Big(\psi(S(\tree)+1)-\psi(1)+1\Big) ^{-(K+1)}\prod\limits_{k=1}\limits^{K}\lambda_k, & \text{if } Z \in \sobm_{p \times K}.
\end{cases}
\end{align*}
For notational simplicity, we suppress the condition on $\tree$ hereafter when we discuss pIBP. 

When $S(\tree) = p$ and $\lambda_k = \int_0^1 \prod_{j = 1}^p \pi_k^{m_k - 1} (1 - \pi_k)^{p - m_k} d \pi_k$, pIBP reduces to IBP, where $m_k=\sum_{j=1}^{p}z_{jk}$ denotes the number of objects possessing feature $k$. The pmf of IBP mixture is
\begin{align*}
\Pi(Z) =\begin{cases}
(H_p+1)^{-1}, & \text{if } Z=\mathbf{0};\\
(H_p+1)^{-(K+1)}\prod_{k=1}^{K}\frac{(p-m_k)! (m_k-1)!}{p!}, & \text{if } Z \in \sobm_{p \times K},
\end{cases}
\end{align*}
where $H_p=\sum_{j=1}^{p} (1 / j)$.

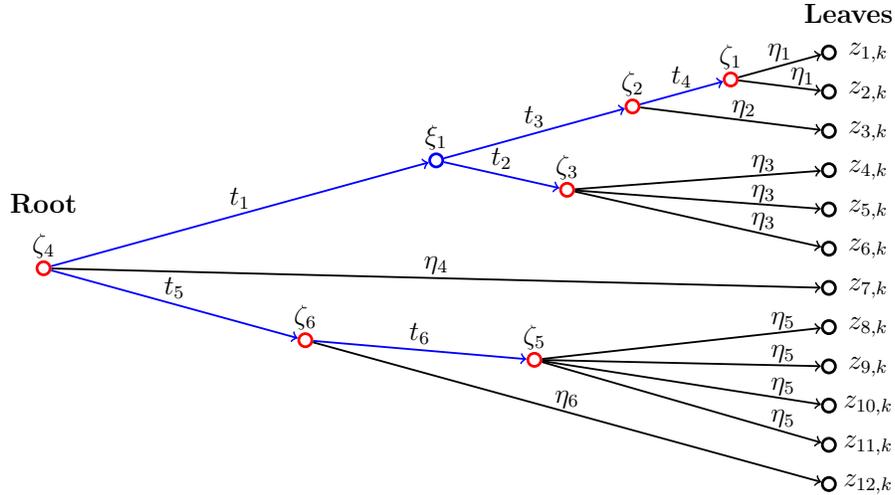
\begin{figure}[h!]
\begin{center}
\scalebox{0.87}{
\begin{tikzpicture}
[every node/.style = {very thick}]
    \node[root, draw = red] at (0, 0)         (root) {};
    
    \node[dummy, draw = blue] at (6, 1.65)   (xi1) {};
    \node[dummy, draw = red] at (10.5, 2.8875)   (zeta1) {};
    \node[dummy, draw = red] at (9, 2.475)   (zeta2) {};
    \node[dummy, draw = red] at (8, 1.2)   (zeta3) {};
    \node[dummy, draw = red] at (7.5, -1.4)   (zeta5) {};
    \node[dummy, draw = red] at (4, -1.1)   (zeta6) {};
    
	\node[dummy, draw = black] at (12, 3.3)   (z1) {};
    \node[dummy, draw = black] at (12, 2.7)   (z2) {};
    \node[dummy, draw = black] at (12, 2.1)   (z3) {};
    \node[dummy, draw = black] at (12, 1.5)   (z4) {};
    \node[dummy, draw = black] at (12, 0.9)   (z5) {};
    \node[dummy, draw = black] at (12, 0.3)   (z6) {};
    \node[dummy, draw = black] at (12, -0.3)   (z7) {};
    \node[dummy, draw = black] at (12, -0.9)   (z8) {};
    \node[dummy, draw = black] at (12, -1.5)   (z9) {};
    \node[dummy, draw = black] at (12, -2.1)   (z10) {};
    \node[dummy, draw = black] at (12, -2.7)   (z11) {};
    \node[dummy, draw = black] at (12, -3.3)   (z12) {};    
    
    \node[env] at (0, 1)    (root_text) {\textbf{Root}};
    \node[env] at (10.5, 3.2375)    (zeta1_rv) {$\zeta_1$};
    \node[env] at (9, 2.825)    (zeta2_rv) {$\zeta_2$};
    \node[env] at (8, 1.55)    (zeta3_rv) {$\zeta_3$};
    \node[env] at (0, 0.35)    (root_rv) {$\zeta_4$};
    \node[env] at (7.5, -1.05) (zeta5_rv) {$\zeta_5$};
    \node[env] at (4, -0.75) (zeta6_rv) {$\zeta_6$};
    \node[env] at (6, 2) (xi1_rv) {$\xi_1$};
    
    \node[env] at (12.3, 3.9) (leaves_text) {\textbf{Leaves}};
    \node[env] at (12.6, 3.3) (z1_rv) {$z_{1, k}$};
    \node[env] at (12.6, 2.7) (z2_rv) {$z_{2, k}$};
    \node[env] at (12.6, 2.1) (z3_rv) {$z_{3, k}$};
    \node[env] at (12.6, 1.5) (z4_rv) {$z_{4, k}$};
    \node[env] at (12.6, 0.9) (z5_rv) {$z_{5, k}$};
    \node[env] at (12.6, 0.3) (z6_rv) {$z_{6, k}$};
    \node[env] at (12.6, -0.3) (z7_rv) {$z_{7, k}$};
    \node[env] at (12.6, -0.9) (z8_rv) {$z_{8, k}$};
    \node[env] at (12.6, -1.5) (z9_rv) {$z_{9, k}$};
    \node[env] at (12.6, -2.1) (z10_rv) {$z_{10, k}$};
    \node[env] at (12.6, -2.7) (z11_rv) {$z_{11, k}$};
    \node[env] at (12.6, -3.3) (z12_rv) {$z_{12, k}$};
    
    \node[env] at (3, 1.075) (edge1) {$t_1$};
    \node[env] at (7, 1.675) (edge2) {$t_2$};
    \node[env] at (7.5, 2.3125) (edge3) {$t_3$};
    \node[env] at (9.75, 2.93125) (edge4) {$t_4$};
    \node[env] at (2, -0.3) (edge5) {$t_5$};
    \node[env] at (5.75, -1) (edge6) {$t_6$};
    
    \node[env] at (11.25, 3.3) (eta1_1) {$\eta_1$};
    \node[env] at (11.6, 2.93) (eta1_2) {$\eta_1$};
    \node[env] at (10.7, 2.4375) (eta2) {$\eta_2$};
    
    \node[env] at (11, 1.6) (eta3_1) {$\eta_3$};
    \node[env] at (11, 1.15) (eta3_2) {$\eta_3$};
    \node[env] at (11, 0.72) (eta3_3) {$\eta_3$};
    
    \node[env] at (6, 0.05) (eta4) {$\eta_4$};
    
    \node[env] at (11.3, -0.8) (eta5_1) {$\eta_5$};
    \node[env] at (11.3, -1.3) (eta5_2) {$\eta_5$};
    \node[env] at (11.3, -1.8) (eta5_3) {$\eta_5$};
    \node[env] at (11.3, -2.3) (eta5_4) {$\eta_5$};
    
    \node[env] at (8, -2) (eta6) {$\eta_6$};

    \draw[draw = blue, ->, line width = 0.8pt] (root) -- (xi1);
    \draw[draw = blue, ->, line width = 0.8pt] (xi1) -- (zeta2);
    \draw[draw = blue, ->, line width = 0.8pt] (zeta2) -- (zeta1);
    \draw[->, line width = 0.8pt] (zeta1) -- (z1);
    \draw[->, line width = 0.8pt] (zeta1) -- (z2);
    \draw[->, line width = 0.8pt] (zeta2) -- (z3);
    \draw[draw = blue, ->, line width = 0.8pt] (xi1) -- (zeta3);
    \draw[->, line width = 0.8pt] (zeta3) -- (z4);
    \draw[->, line width = 0.8pt] (zeta3) -- (z5);
    \draw[->, line width = 0.8pt] (zeta3) -- (z6);
    \draw[->, line width = 0.8pt] (root) -- (z7);
    \draw[->, line width = 0.8pt] (zeta5) -- (z8);
    \draw[->, line width = 0.8pt] (zeta5) -- (z9);
    \draw[->, line width = 0.8pt] (zeta5) -- (z10);
    \draw[->, line width = 0.8pt] (zeta5) -- (z11);
    \draw[draw = blue, ->, line width = 0.8pt] (zeta6) -- (zeta5);
    \draw[draw = blue, ->, line width = 0.8pt] (root) -- (zeta6);
    \draw[->, line width = 0.8pt] (zeta6) -- (z12);
\end{tikzpicture}
}
\end{center}
\caption{An example of a tree structure $\tree$, which is a directed graph with random variables at the nodes (marked as circles). 
Entries of the $k$-th column of $Z$, $z_{jk}$'s, are at the leaves.
The lengths of all edges of $\tree$, $t_i$'s and $\eta_l$'s, are marked on the figure. In particular, $\eta_l$'s represent the lengths between each leaf ($z_{jk}$, in black) and its parent node ($\zeta_l$, in red).
The total edge lengths $S(\tree)$ is the summation of the lengths of all edges of $\tree$.
In this example, $S(\tree) = \sum_{1\leq i\leq 6}t_i+(2\eta_1+\eta_2+3\eta_3+\eta_4+4\eta_5+\eta_6)$. The condition in case (2) of Lemma \ref{lem1} in Section \ref{sec:rate} means $\inf\limits_{1\leq l\leq 6}\eta_l\geq \eta_0$ for some $\eta_0 > 0$.
}
\label{fig:eg_tree}

\end{figure}

\section{Posterior Contraction Rate under the Sparsity Condition}
\label{sec:rate}

In this section, we establish the posterior contraction rate of IBP mixture and pIBP mixture under a sparsity condition. All the proofs are given in the supplement.

The sparsity condition is defined below for a sequence of binary matrices $\{ Z_n^*, n = 1, 2, \ldots\}$.

\begin{definition}[\textbf{Sparsity}] \label{def:sparsity}
\theoremstyle{definition}
Consider a sequence of binary matrices $\{ Z_{n}^*, n = 1, 2, \ldots \}$
where $Z_{n}^*=(z_{jk}^*)_{p_n \times K_{n}^*}$. 
Assume that $m_{kn}\triangleq \sum_{j=1}^{p_n} z_{jk}^* \leq s_n$  for some $s_n\geq 1$ and all $k = 1, \ldots, K_{n}^*$. We say $\{ Z_{n}^*, n = 1, 2, \ldots \}$ are sparse if $s_n / p_n \rightarrow 0$ as $n\rightarrow \infty$.
\end{definition}

The condition indicates that as sample size increases, the number of objects possessing any feature is upper-bounded and must be relatively small compared to the total number of objects.  
Therefore, such an assumption may be  
 assessed via simulation studies and then applied to real-world
 applications. Examples will be provided later on. 
 Similar but more strict assumptions are made in \cite{castillo2012needles} and \cite{pati2014posterior}, under different contexts.

Next, we review the definition of posterior contraction rate.

\begin{definition}[\textbf{Posterior Contraction Rate}]\label{def2}
Let $\{ Z_{n}^*, n = 1, 2, \ldots \}$ represent a sequence of true latent feature matrices where each $Z_{n}^*$ has $p_n$ rows. For each $n$, the observations are generated from $\bx_i^{T}\stackrel{i.i.d.}{\sim} N(0,Z_{n}^* Z_{n}^{*T}+\bm{I}_{p_n})$ for $i=1,2,\cdots,n$.
Denote by $\Pi(\cdot \mid X)$ the posterior distribution of the latent feature matrix under IBP mixture or pIBP mixture prior. If
\begin{align*}
E_{Z_{n}^*} \left[ \Pi \left( \|Z_nZ_n^{T}-Z_{n}^* Z_{n}^{*T}\|\leq C\epsilon_n \mid X \right) \right] \rightarrow 1
\end{align*}
as $n\rightarrow \infty$, where $\|\cdot\|$ represents the spectral norm and $C$ is a positive constant, then we say the posterior contraction rate of $Z_nZ_n^{T}$ to the true $Z_{n}^* Z_{n}^{*T}$ is $\epsilon_n$ under the spectral norm.
\end{definition}

For the proof of the main theorem, we derive the following lemma.  The lemma establishes the lower bound of $\Pi(Z_n = Z_{n}^*)$ for a sequence of binary matrices $\{ Z_{n}^*, n = 1, 2, \ldots \}$ satisfying the sparsity condition, where $\Pi(\cdot)$ represents the pmf of IBP mixture or pIBP mixture. 

\begin{lemma}\label{lem1}
Consider a sequence of binary matrices $\{ Z_{n}^*: Z_{n}^* \in \{\mathbf{0}\} \bigcup \sobm_{p_n \times K_{n}^{*}}, n = 1, 2, \ldots \}$ that are sparse under Definition \ref{def:sparsity}. Parameters $m_{kn}$ and $s_n$ are defined accordingly (for $\mathbf{0}$ matrices, let $s_n=K_n^*=1$). We have
\begin{align*}
\Pi(Z_n = Z_{n}^* )\geq \exp \left( -Cs_nK_{n}^{*}\log(p_n+1) \right)
\end{align*}
for some positive constant $C$, if either of the following two cases is true: (1) $Z_n$ follows IBP mixture; (2) $Z_n$ follows pIBP mixture, and the minimal length between each leaf and its parent node is lower bounded by $\eta_0 > 0$ (see Figure \ref{fig:eg_tree}).
\end{lemma}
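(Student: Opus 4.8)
The plan is to take logarithms in the two pmf formulas and bound the resulting pieces separately. Both the IBP and pIBP pmf's factor as a power of a normalizing constant times a product of one factor per column, namely $(H_{p_n}+1)^{-(K_n^*+1)}\prod_{k}\frac{(p_n-m_{kn})!(m_{kn}-1)!}{p_n!}$ for IBP and $\left(\psi(S(\tree)+1)-\psi(1)+1\right)^{-(K_n^*+1)}\prod_{k}\lambda_k$ for pIBP (Equation \ref{eq:pmf-pIBP}). The $\mathbf{0}$ case is immediate: with $s_n=K_n^*=1$ the pmf equals $(H_{p_n}+1)^{-1}$, respectively $(\psi(S(\tree)+1)-\psi(1)+1)^{-1}$, whose logarithm is $O(\log(p_n+1))$ once we note $H_{p_n}\leq 1+\log p_n$ and $\psi(x)\leq\log x$. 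So I would reduce everything to two estimates, an upper bound on the logarithm of the normalizing constant and a lower bound on each per-column factor, and then recombine using $m_{kn}\leq s_n$ and $K_n^*+1\leq 2s_nK_n^*$ (both valid since $s_n,K_n^*\geq 1$).

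For the IBP case all quantities are explicit. I would rewrite the column factor as $\frac{(p_n-m_{kn})!(m_{kn}-1)!}{p_n!}=\frac{1}{m_{kn}\binom{p_n}{m_{kn}}}\geq \frac{1}{m_{kn}\,p_n^{m_{kn}}}$, so its negative logarithm is at most $\log m_{kn}+m_{kn}\log p_n\leq 2m_{kn}\log(p_n+1)\leq 2s_n\log(p_n+1)$. Since $H_{p_n}+1\leq 2+\log p_n$, the normalizing factor contributes $-(K_n^*+1)\log(H_{p_n}+1)\geq -C(K_n^*+1)\log(p_n+1)$. Summing the $K_n^*$ column factors and collecting constants gives $\log\Pi(Z_n=Z_n^*)\geq -Cs_nK_n^*\log(p_n+1)$, which is the claim.

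The pIBP case is the substantive part, because $\lambda_k=\int_0^1 P(\bz_k\mid\pi_k,\tree)\pi_k^{-1}\,d\pi_k$ has no closed form. My plan is to produce a lower bound on $\lambda_k$ of the same order $\exp(-Cm_{kn}\log(p_n+1))$ as in the IBP case, after which the recombination is identical; the normalizing constant is handled exactly as above using $\psi(S(\tree)+1)-\psi(1)+1\leq \log(S(\tree)+1)+C$, whose logarithm is $O(\log(p_n+1))$ provided $S(\tree)$ grows at most polynomially in $p_n$, as it does for a tree with $O(p_n)$ bounded edges. To bound $\lambda_k$ I would exploit the conditional-independence structure of the tree: writing $P(\bz_k\mid\pi_k,\tree)$ as a sum over the latent values at the internal nodes, I keep only the single term in which every internal node (the root and all $\zeta_l,\xi_l$ of Figure \ref{fig:eg_tree}) is ``off.'' Conditionally on this event the leaves are independent, so the term factors over leaves and gives $P(\bz_k\mid\pi_k,\tree)\geq q_0(\pi_k)\prod_{j:z_{jk}^*=1}b_j(\pi_k)\prod_{j:z_{jk}^*=0}\big(1-b_j(\pi_k)\big)$, where $b_j(\pi_k)$ is the probability that leaf $j$ turns on across its terminal edge given an ``off'' parent and $q_0(\pi_k)$ is the probability that all internal nodes are off.

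The crux is controlling these three pieces for small $\pi_k$, the regime that dominates the integral. For each ``on'' leaf I would use the hypothesis that its terminal edge length $\eta_{l(j)}$ is at least $\eta_0$: since the edge birth probability is increasing in the edge length and is proportional to $\pi_k$ as $\pi_k\to 0$, this yields $b_j(\pi_k)\geq c_1(\eta_0)\,\pi_k$ with $c_1(\eta_0)>0$, which is exactly where the lower bound $\eta_0$ is needed, as without it the factor would degrade like $c_1\eta_{l(j)}\pi_k$ and could vanish. For the $p_n-m_{kn}$ ``off'' leaves I would use $b_j(\pi_k)\leq c_2\pi_k$, so their product is at least $(1-c_2\pi_k)^{p_n}\geq e^{-c_2'\pi_k p_n}$, which stays bounded below on the relevant scale $\pi_k\sim 1/p_n$; the same reasoning bounds $q_0(\pi_k)$ from below. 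Substituting these gives $\lambda_k\geq c_1(\eta_0)^{m_{kn}}\int_0^1 \pi_k^{m_{kn}-1}e^{-c_2'\pi_k p_n}\,d\pi_k$, a Gamma-type integral of order $(m_{kn}-1)!\,(c_2'p_n)^{-m_{kn}}$, whence $\lambda_k\geq \exp(-Cm_{kn}\log(p_n+1))$ after absorbing the $\eta_0$-dependent constant into $C$. I expect this last step, namely extracting the $\pi_k$-proportionality of the edge transition probabilities from the Miller construction and verifying that the $p_n-m_{kn}$ ``off'' leaves do not contribute a factor that is exponentially small in $p_n$, to be the main obstacle; once it is in place, $\prod_{k}\lambda_k\geq\exp(-C\sum_k m_{kn}\log(p_n+1))\geq\exp(-Cs_nK_n^*\log(p_n+1))$ closes the argument.
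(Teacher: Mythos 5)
Your proposal is correct and follows essentially the same route as the paper's proof: the key step of lower-bounding $P(\bm{z}_k \mid \pi_k, \tree)$ by retaining only the configuration in which every internal node is ``off,'' so that the bound factorizes over leaves with the terminal-edge factors $1-(1-\pi_k)^{\eta_l}\geq \eta_0\pi_k$ for on-leaves and $(1-\pi_k)^{\eta_l}$ for off-leaves, is exactly the paper's preparatory Lemma on $\lambda_k$, and your Gamma-type integral $\int_0^1 \pi_k^{m_{kn}-1}e^{-c\pi_k p_n}\,d\pi_k$ is the same order as the paper's Beta-function evaluation. The only cosmetic differences are that you treat the IBP case by a direct binomial identity rather than as the $\eta_l\equiv 1$ special case of pIBP, and you track $\sum_k m_{kn}\leq s_nK_n^*$ instead of bounding each $m_{kn}$ by $s_n$ before integrating.
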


\begin{remark}
\label{rem1}
Results in Lemma \ref{lem1} depend on the sparsity condition. As a counterexample, we approximate $\Pi(Z_n = Z_{n}^*)$ for a sequence of non-sparse binary matrices $\{Z_{n}^*, n = 1, 2, \ldots \}$, where $Z_{n}^* \in \sobm_{p_n}$, and $Z_n$ follows IBP mixture. Recall that for IBP mixture, we have 
\begin{align*}
\Pi(Z_n = Z_{n}^* )=\frac{1}{(H_{p_n}+1)^{K_{n}^{*}+1}}\prod_{k=1}^{K_{n}^{*}}\frac{(p_n-m_{kn})! (m_{kn}-1)!}{p_n!}.
\end{align*}
Let $m_{kn}= p_n / 2$ for every column $k = 1, \ldots, K_{n}^{*}$, then $[(p_n-m_{kn})!(m_{kn}-1)!] / (p_n!) = [(p_n / 2)!(p_n / 2 -1)!] / (p_n!)$ and Stirling's formula implies that $[(p_n / 2)!(p_n / 2 -1)!] / (p_n!) \sim \sqrt{2\pi/ p_n} \ 2^{-p_n}$. Since $H_{p_n}=\log p_n+O(1)$,
\begin{align*}
\Pi(Z_n = Z_{n}^{*})\leq \exp(-CK_{n}^{*}\log \log (p_n+2))\left( C\sqrt{\frac{2\pi}{p_n}}\frac{1}{2^{p_n}}\right) ^{K_{n}^{*}}
\leq \exp(-CK_{n}^{*}p_n).
\end{align*}
Comparing the results obtained with and without sparsity conditions, we find the lower-bound with sparsity condition is very likely to be larger than the upper-bound without sparsity condition. To consider a very extreme case, when $s_n=\log (p_n+1)$,  the lower-bound 
\noindent$\exp(-CK_{n}^{*}(\log (p_n+1))^2)$ is much larger than $\exp(-CK_{n}^{*}p_n)$. \qed
\end{remark}


We present the main theorem of this paper, which proves that for a sequence of true latent feature matrix $Z_{n}^*$ that satisfy the sparsity condition in Definition \ref{def:sparsity}, the posterior distribution of the similarity matrix $Z_nZ_n^T$ converges to $Z_{n}^* Z_{n}^{*T}$.
The theorem eventually leads to the main theoretical result in Remark \ref{rem-main} later.

\begin{theorem}\label{thm1}

Consider a sequence of sparse binary matrices $\{ Z_{n}^*, n = 1, 2, \ldots \}$ as in Definition \ref{def:sparsity} and the prior in either of the two cases of Lemma \ref{lem1}. 
For each $n$, suppose the observations are generated from $\bx_i^{T}\stackrel{i.i.d.}{\sim} N(0,Z_{n}^* Z_{n}^{*T}+\bm{I}_{p_n})$ for $i=1,2,\cdots,n$. Let 
$$
\epsilon_n=\frac{\max \left\{ \sqrt{p_n},\sqrt{s_nK_{n}^{*}\log (p_n+1)} \right\}}{\sqrt{n}}\max\{1,\|Z_{n}^*Z_{n}^{*T} \|\}.
$$
If $\epsilon_n \rightarrow 0$ as $n\rightarrow \infty$, then we have
\begin{align*}
E_{Z_{n}^{*}} \left[\Pi \left(\|Z_nZ_n^T-Z_{n}^{*}Z_{n}^{*T}\|\leq C\epsilon_n \mid X \right) \right] \rightarrow 1
\end{align*}
as $n\rightarrow\infty$ for some positive constant $C$. In other words, $\epsilon_n$ is the posterior contraction rate under the spectral norm.
\end{theorem}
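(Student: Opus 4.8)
My plan is to bound the posterior mass of the ``far'' event directly through the likelihood-ratio representation of the posterior, in the spirit of the general contraction theory of \cite{ghosal2000convergence} and its covariance/factor-model adaptations \citep{pati2014posterior}, while exploiting the discreteness of the IBP/pIBP prior. The first move is a reduction: under the sampling model each $\bx_i^T$ is $N_{p_n}(\bm 0, \Sigma_Z)$ with $\Sigma_Z \triangleq ZZ^T + \bm I_{p_n} \succeq \bm I_{p_n}$, and since the identity cancels, $\|\Sigma_Z - \Sigma_{Z_n^*}\| = \|ZZ^T - Z_n^* Z_n^{*T}\|$. Thus it suffices to bound the posterior probability of $A_n \triangleq \{Z : \|\Sigma_Z - \Sigma^*\| > C\epsilon_n\}$, where $\Sigma^* \triangleq Z_n^* Z_n^{*T} + \bm I_{p_n}$. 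Writing the posterior as a ratio of integrated likelihood ratios $\prod_i p_Z(\bx_i)/p_{\Sigma^*}(\bx_i)$ against the prior, I would lower-bound the denominator and upper-bound the numerator restricted to $A_n$. The lower bound on all eigenvalues is used throughout to keep statistical distances comparable to eigenvalue differences as $p_n$ grows.

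For the denominator the key simplification is that the prior places a genuine atom at the true matrix: restricting the integral to $\{Z = Z_n^*\}$ makes every likelihood ratio identically one, so the denominator is at least $\Pi(Z_n = Z_n^*) \geq \exp(-Cs_nK_n^*\log(p_n+1))$ by Lemma \ref{lem1}, \emph{deterministically}. This is sharper than the usual Kullback--Leibler-ball lower bound and is what ultimately lets a fixed separation constant work. For the numerator I would split $A_n$ along a sieve $\mathcal{Z}_n$ of binary matrices whose number of columns and whose column sums are capped (roughly $K \lesssim \max\{p_n, s_nK_n^*\log(p_n+1)\}/\log p_n$ and $m_k \lesssim s_n$). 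The off-sieve contribution has expectation equal to $\Pi(\mathcal{Z}_n^c)$ by Fubini and $E_{\Sigma^*}[\prod_i p_Z/p_{\Sigma^*}] = 1$; using the explicit pmf in \eqref{eq:pmf-pIBP} and its IBP specialization, whose $(H_{p_n}+1)^{-(K+1)}$ and $\lambda_k$ factors suppress many or dense columns geometrically, I would show $\Pi(\mathcal{Z}_n^c)$ is small enough to be dominated by the denominator.

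The on-sieve far contribution $A_n \cap \mathcal{Z}_n$ is handled by tests. I would cover $\{\Sigma_Z : Z \in \mathcal{Z}_n\}$ by spectral-norm balls, whose log covering number is of order $s_nK_n^*\log p_n$ by the sparse-matrix count, and, for each center, build a test from the sample covariance $\widehat\Sigma = n^{-1}\sum_i \bx_i^T\bx_i$ via operator-norm concentration for Gaussian sample covariances. Because $\Sigma_Z, \Sigma^* \succeq \bm I_{p_n}$, the fluctuation of $\|\widehat\Sigma - \Sigma^*\|$ is of order $\|\Sigma^*\|\sqrt{p_n/n}$, which is exactly the scale of the first term of $\epsilon_n$; at separation $C\epsilon_n$ both the Type I error and the (cover-uniform) Type II error are of order $\exp(-c\,C^2\max\{p_n, s_nK_n^*\log(p_n+1)\})$. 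Crucially the factor $\max\{1,\|Z_n^*Z_n^{*T}\|\}$ in $\epsilon_n$ scales the separation and the noise together, so $\|\Sigma^*\|$ cancels in the error exponent and a single fixed, sufficiently large constant $C$ beats both the covering-number factor and the denominator.

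Assembling the three pieces, namely denominator $\geq \exp(-Cs_nK_n^*\log(p_n+1))$, negligible off-sieve mass, and on-sieve far mass killed by the tests, yields $E_{Z_n^*}[\Pi(A_n \mid X)] \to 0$, which is the claim with contraction rate $\epsilon_n$. I expect the main obstacle to be the testing step: obtaining operator-norm concentration for $\widehat\Sigma - \Sigma^*$ that is simultaneously sharp at the rate $\|\Sigma^*\|\sqrt{p_n/n}$ and exponential with the correct exponent $\max\{p_n, s_nK_n^*\log(p_n+1)\}$, and making it uniform over the spectral-norm cover of the sieve, all while $p_n$ grows. A secondary difficulty is the remaining-mass bound, which requires quantitatively tracking how the pmf \eqref{eq:pmf-pIBP} penalizes matrices with too many or too dense columns; the bookkeeping that aligns the prior-mass exponent, the entropy, and the test exponent so that a fixed $C$ suffices is the delicate part.
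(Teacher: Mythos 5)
Your two load-bearing ingredients are the same as the paper's: the deterministic lower bound on the posterior denominator by the prior atom $\Pi(Z_n=Z_n^*)\geq \exp(-Cs_nK_n^*\log(p_n+1))$ from Lemma \ref{lem1}, and a test built from the operator norm of the sample covariance via Gaussian concentration (the paper's Lemma \ref{Lem2}). However, the middle of your argument --- the sieve plus spectral-norm covering with a union bound over centers --- is both unnecessary and, as specified, does not close. The remaining-mass step fails: under the IBP the marginal prior weight on a column with sum $m_k=m$ is $\frac{(p-m)!(m-1)!}{p!}$ times the $\binom{p}{m}$ choices of that column, i.e.\ proportional to $1/m$, so the prior penalizes dense columns only harmonically, not ``geometrically'' as you claim. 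Consequently $\Pi(\text{some }m_k>s_n)$ is of order $K\log(p_n/s_n)/\log p_n$ and cannot be made $o\bigl(\exp(-Cs_nK_n^*\log(p_n+1))\bigr)$, which is what dominating the denominator requires. The tail in $K$ is geometric only with ratio $H_{p_n}/(H_{p_n}+1)\approx 1-1/\log p_n$, so capping $K$ forces $K\lesssim s_nK_n^*(\log p_n)^2$, and the resulting sieve has log-cardinality of order $s_n^2K_n^*(\log p_n)^3$, exceeding your test exponent $\max\{p_n,s_nK_n^*\log(p_n+1)\}$ in general. A secondary gap: your claim that ``$\|\Sigma^*\|$ cancels in the error exponent'' ignores that the Type II error under $P_\Sigma$ involves fluctuations of $\hat\Sigma$ at scale $\|\Sigma\|\sqrt{p_n/n}$, which can dwarf $\epsilon_n'\asymp\|\Sigma^*\|\sqrt{p_n/n}$ when $\|\Sigma\|\gg\|\Sigma^*\|$.

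The paper avoids all of this by invoking Lemma \ref{Lem1} (Theorem 1 of \citealt{chen2016posterior}), which bounds $E_{Z^*}[\Pi(U\mid X)]$ by $E_{Z^*}(\phi)+\Pi(ZZ^T=Z^*Z^{*T})^{-1}\sup_{Z\in U}E_Z(1-\phi)$ for a \emph{single} test $\phi=\mathds{1}\{\|\hat\Sigma-\Sigma^*\|>C'\epsilon_n'/2\}$ --- your ``deterministic denominator'' observation is precisely the content of that lemma. The triangle inequality gives $E_Z(1-\phi)\leq P_\Sigma(\|\hat\Sigma-\Sigma\|\geq C'\epsilon_n'/2)$ uniformly over $U$, so no sieve, covering number, or union bound is needed; the only remaining issue is the scale of the concentration under $P_\Sigma$, which the paper resolves by splitting into $\|\Sigma\|\geq 2\|\Sigma^*\|$ (yielding a relative deviation of size $1/4$ and hence error $\exp(-C_4n)$) and $\|\Sigma\|<2\|\Sigma^*\|$ (applying Lemma \ref{Lem2} with $t=C_5\sqrt{s_nK_n^*\log(p_n+1)}$ for $C_5$ large enough to beat the prior-mass exponent). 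If you replace your sieve-and-entropy machinery with this single global test and the case split on $\|\Sigma\|$, the rest of your outline goes through.
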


For the sequence of sparse binary matrices $\{ Z_{n}^{*}, n = 1, 2, \ldots \}$ considered in Theorem \ref{thm1}, if $\|Z_{n}^{*}Z_{n}^{*T}\|\leq M_n$ for some $M_n\geq 1$ and $\tilde{\epsilon}_n \triangleq \frac{\max \left\{ \sqrt{p_n},\sqrt{s_n K_{n}^{*} \log(p_n + 1)} \right\}}{\sqrt{n}}M_n\rightarrow 0$ as $n\rightarrow  \infty $, 
then
$\tilde{\epsilon}_n$ is a valid posterior contraction rate. 
We show in the following corollary that if the number of features possessed by each object is upper bounded, then a new posterior contraction rate with a simpler expression (compared to Theorem \ref{thm1}) can be derived.

\begin{corollary}\label{cor1}
Consider a sequence of sparse binary matrices $\{ Z_{n}^*, n = 1, 2, \ldots \}$ as in Definition \ref{def:sparsity}
where $Z_{n}^*=(z_{jk}^*)_{p_n \times K_{n}^*}$. Suppose that there exists $q_n\geq 1$ such that
$$
\sup\limits_{1\leq j\leq p_n}\Big(\sum\limits_{k=1}^{K_{n}^{*}}z_{jk}^{*}\Big) \leq q_n,
$$ 
i.e., the number of features possessed by each object (non-zero entries of each row of $Z_{n}^{*}$) is upper bounded by $q_n$. Given the same assumptions in Theorem \ref{thm1} except for replacing 
$\epsilon_n\rightarrow 0$ by 
\begin{align*}
\tilde{\epsilon}_n\triangleq \frac{\max \left\{ \sqrt{p_n},\sqrt{s_nK_{n}^{*}\log(p_n+1)} \right\} }{\sqrt{n}}s_nq_n\rightarrow 0
\end{align*}
as $n\rightarrow \infty$, we have
\begin{align*}
E_{Z_{n}^{*}} \left[\Pi \left( \|Z_n Z_n^T - Z_{n}^{*} Z_{n}^{*T}\|\leq C\tilde{\epsilon}_n \mid X \right) \right]\rightarrow 1
\end{align*}
as $n\rightarrow\infty$ for some positive constant $C$. Specifically,
\begin{enumerate}
\item if there is no additional condition on $q_n$,
\begin{align*}
\tilde{\epsilon}_n=\frac{\max \left\{\sqrt{p_n},\sqrt{s_nK_{n}^{*}\log(p_n+1)} \right\}}{\sqrt{n}}s_nK_{n}^{*};
\end{align*}

\item if $q_n$ is a constant, 
\begin{align*}
\tilde{\epsilon}_n=\frac{\max \left\{\sqrt{p_n},\sqrt{s_nK_{n}^{*}\log(p_n+1)} \right\}}{\sqrt{n}}s_n.
\end{align*}
\end{enumerate}
\end{corollary}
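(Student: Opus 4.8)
The plan is to reduce Corollary \ref{cor1} to Theorem \ref{thm1} by bounding the spectral norm $\|Z_{n}^{*}Z_{n}^{*T}\|$ in terms of the column-sparsity $s_n$ and the row-sparsity $q_n$. Recall from the discussion following Theorem \ref{thm1} that whenever $\|Z_{n}^{*}Z_{n}^{*T}\| \leq M_n$ for some $M_n \geq 1$, the quantity $\frac{\max\{\sqrt{p_n}, \sqrt{s_n K_{n}^{*}\log(p_n+1)}\}}{\sqrt{n}}M_n$ is a valid posterior contraction rate as long as it tends to $0$. Thus the entire content of the corollary is to establish the explicit bound $M_n = s_n q_n$; the two displayed special cases then follow by elementary substitution.

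First I would record that $Z_{n}^{*}Z_{n}^{*T}$ is a symmetric matrix with non-negative integer entries
\[
(Z_{n}^{*}Z_{n}^{*T})_{jl} = \sum_{k=1}^{K_{n}^{*}} z_{jk}^{*}z_{lk}^{*},
\]
the number of features shared by objects $j$ and $l$. For a symmetric matrix with non-negative entries the spectral norm equals the spectral radius and is dominated by the largest absolute row sum (by Gershgorin's disc theorem, equivalently the $\ell_\infty$ operator-norm bound combined with symmetry). Hence
\[
\|Z_{n}^{*}Z_{n}^{*T}\| \leq \max_{1\leq j\leq p_n}\sum_{l=1}^{p_n}(Z_{n}^{*}Z_{n}^{*T})_{jl}.
\]

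Next I would evaluate each row sum by interchanging the two summations,
\[
\sum_{l=1}^{p_n}\sum_{k=1}^{K_{n}^{*}}z_{jk}^{*}z_{lk}^{*} = \sum_{k=1}^{K_{n}^{*}}z_{jk}^{*}\Big(\sum_{l=1}^{p_n}z_{lk}^{*}\Big) = \sum_{k=1}^{K_{n}^{*}}z_{jk}^{*}\, m_{kn},
\]
and then apply the column-sparsity bound $m_{kn}\leq s_n$ from Definition \ref{def:sparsity} together with the row-sparsity hypothesis $\sum_{k}z_{jk}^{*}\leq q_n$ to obtain $\sum_{l}(Z_{n}^{*}Z_{n}^{*T})_{jl}\leq s_n\sum_{k}z_{jk}^{*}\leq s_n q_n$. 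Since $s_n,q_n\geq 1$, this yields $\max\{1,\|Z_{n}^{*}Z_{n}^{*T}\|\}\leq s_n q_n$, so I take $M_n=s_nq_n$. Because $\|Z_{n}^{*}Z_{n}^{*T}\|\leq M_n$ forces $\epsilon_n\leq\tilde{\epsilon}_n$ with $\tilde{\epsilon}_n$ built from $M_n$, the assumption $\tilde{\epsilon}_n\to 0$ gives $\epsilon_n\to 0$, so Theorem \ref{thm1} applies at rate $\epsilon_n$, and monotonicity of the posterior mass in the radius of the ball transfers the conclusion to the larger radius $C\tilde{\epsilon}_n$.

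Finally, the two special cases are immediate. In case (1) no extra assumption is imposed, so I use the trivial bound $q_n\leq K_{n}^{*}$ (each object can possess at most all $K_{n}^{*}$ features), giving $s_nq_n\leq s_nK_{n}^{*}$ and hence the displayed rate with $s_nK_{n}^{*}$. In case (2) $q_n$ is a constant and is absorbed into the generic constant $C$, leaving $s_n$. I expect no serious obstacle here: the only step requiring genuine care is the spectral-norm estimate, and specifically the justification that for a symmetric non-negative matrix the spectral radius is controlled by the maximal row sum; everything else is bookkeeping and a direct appeal to Theorem \ref{thm1}.
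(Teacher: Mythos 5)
Your proposal is correct and follows essentially the same route as the paper's own proof: bound $\|Z_n^*Z_n^{*T}\|$ by the maximal row sum of the similarity matrix, interchange the sums to get $\sum_k z_{jk}^* m_{kn} \leq s_n q_n$, and then invoke Theorem \ref{thm1} with the monotonicity of the posterior mass in the ball radius. Your explicit appeal to Gershgorin (equivalently the $\ell_\infty$-norm bound plus symmetry) merely makes precise a step the paper states without justification, so there is no substantive difference.
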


\begin{remark}\label{rem-main}
Consider the second case of Corollary \ref{cor1}. If (1)
$s_nK_{n}^{*}\log(p_n+1)=O(p_n)$ and (2) $s_n=O(\log(p_n+1))$,
then $\tilde{\epsilon}_n\leq
  C\frac{\sqrt{p_n}\log(p_n+1)}{\sqrt{n}}$;  therefore, if $p_n(\log(p_n+1))^2=o(n)$,  then $ \frac{\sqrt{p_n}\log(p_n+1)}{\sqrt{n}}$ is a valid posterior contraction rate.
In other words, to ensure posterior convergence, we only need $n$ to increase a little bit faster than $p_n$, given the assumptions (1), (2) and the condition in the second case of Corollary \ref{cor1}.
\end{remark}


\section{Posterior Inference Based on MCMC}
\label{sec:MCMC}

We have specified the hierarchical models including the sampling model $p(X \mid Z)$ and the prior models for the parameters $\Pi(Z \mid \alpha)$ and $p(\alpha)$. In particular, $p(X \mid Z)$ is the latent feature model (Equation \ref{eq:sampling_model}), $\Pi(Z \mid \alpha)$ is the IBP or pIBP prior (Equation \ref{eq:pmf-pIBP}) and $p(\alpha) = \text{Gamma}(1, 1)$. For the theoretical results in Section \ref{sec:rate}, we integrate out $\alpha$. For posterior inference, we keep $\alpha$ so that the conditional distributions can be obtained in closed form.

We use Markov chain Monte Carlo simulations to generate samples from the posterior $(Z, \alpha) \sim \Pi(Z, \alpha \mid X) \propto p(X \mid Z) \Pi(Z\mid \alpha) p(\alpha)$. After iterating sufficiently many steps, the samples of $Z$ drawn from the Markov chain approximately follow $\Pi(Z \mid X)$.
Gibbs sampling transition probabilities can be used to update $Z$ and $\alpha$, as described in \cite{griffiths2011indian} and \cite{miller2012phylogenetic}.

To overcome trapping of the Markov chain in local modes in the high-dimensional setting, we use parallel tempering Markov chain Monte Carlo (PTMCMC) \citep{geyer1991markov} in which several Markov chains at different temperatures run in parallel and interchange the states across each other. 
In particular, the target distribution of the Markov chain indexed by temperature $T$ is
\begin{align*}
\Pi^{(T)}(Z, \alpha \mid X) \propto p(X \mid Z)^{\frac{1}{T}} \Pi(Z\mid \alpha) p(\alpha).
\end{align*}
Parallel tempering helps the original Markov chain (the Markov chain whose temperature $T$ is 1) avoid getting stuck in local modes and approximate the target distribution efficiently.
 
We give an algorithm below for sampling from $\Pi(Z,\alpha \mid X)$ where $\Pi(Z \mid \alpha)$ follows IBP. 
The algorithm describes in detail how PTMCMC can be combined with the Gibbs sampler in \cite{griffiths2011indian}.
The algorithm iterates Step 1 and Step 2 in turn.


\noindent\textbf{Step 1 (Updating $Z$ and $\alpha$).}
Denote by $z_{jk}$ the entries of 
$Z$, $j = 1, \ldots, p$ and $k = 1, \ldots, K$. 
We update $Z$ by row.
For each row $j$, we iterate through the columns $k = 1, \ldots, K$.
We first make a decision to drop the $k$-th column of $Z$, $\bm{z}_k$, if and only if $m_{(-j)k} = \sum_{j'\neq j}z_{j'k} = 0$. 
In other words, if feature $k$ is not possessed by any object other than $j$, then the $k$-th column of $Z$ should be dropped, regardless of whether $z_{jk}=0$ or $1$.

If the $k$-th column is not dropped, we sample $z_{jk}$ from
\begin{align*}
\Pi^{(T)}(z_{jk} \mid \ldots)\propto p(X \mid Z)^{\frac{1}{T}}\Pi(z_{jk} \mid Z_{(-jk)},\alpha),
\end{align*}
where $Z_{(-jk)}$ represents all entries of $Z$ except $z_{jk}$ and $P(X \mid Z)$ is determined by $\bx_i^T \mid Z \stackrel{i.i.d.}{\sim}N(0,ZZ^T+\bm{I}_p)$, in which $\bx_i$'s are rows of $X$.
The conditional prior $\Pi(z_{jk} \mid Z_{(-jk)},\alpha)$ only depends on $\bm{z}_{(-j)k}$ (i.e., the $k$-th column of $Z$ excluding $z_{jk}$). Specifically, 
\begin{align*}
\Pi(z_{jk}=1 \mid Z_{(-jk)},\alpha)=\frac{m_{(-j)k}}{p}.
\end{align*}

After updating all entries in the $j$-th row, we add a random number of $K_{j}^{+}$ columns (features) to $Z$. The $K_{j}^{+}$ new features are only possessed by object $j$, i.e. only the $j$-th entry is 1 while all other entries are 0.
Let $Z_{+}$ denote the feature matrix after $K_{j}^{+}$ columns are added to the old feature matrix.
The conditional posterior distribution of $K_{j}^{+}$ is
\begin{align*}
P^{(T)}(K_{j}^{+} \mid \ldots) \propto p(X \mid Z_{+})^{\frac{1}{T}}P(K_{j}^{+} \mid \alpha),
\end{align*}
in which $P(K_{j}^{+} \mid \alpha)$ is the prior distribution of $K_{j}^{+}$ under IBP, 
$P(K_{j}^{+} \mid \alpha) = \text{Pois}\left(\alpha/p\right)$.
The support of $P^{(T)}(K_{j}^{+} \mid \ldots)$ is $\mathbb{N}$. For easier evaluation of $P^{(T)}(K_{j}^{+} \mid \ldots)$, we work with an approximation by truncating $P^{(T)}(K_{j}^{+} \mid \ldots)$ at level $K_{\max}^+$, similar to the idea of truncating a stick-breaking prior in \cite{ishwaran2001gibbs}. The value $K_{\max}^{+}$ is the maximum number of new columns (features) that can be added to $Z$ each time we update the $j$-th row.
Denote by $\tilde{P}^{(T)}(K_{j}^{+} \mid \ldots)$ the truncated conditional posterior, we have
\begin{align}
\tilde{P}^{(T)}(K_{j}^{+} = k \mid \ldots)=\frac{P^{(T)}(k \mid \ldots)}{\sum_{k' = 0}^{K_{\max}^+} P^{(T)}(k' \mid \ldots) },\quad \text{for}\ k =0,1,\cdots,K_{\max}^{+}. \label{ap1}
\end{align}

Lastly, we update $\alpha$.
Given $Z$, the observed data $X$ and $\alpha$ are conditionally independent, which implies that the conditional posterior distribution of $\alpha$ at any temperature $T$ is the same. We sample $\alpha$ from
\begin{align*}
(\alpha \mid \ldots) \sim \text{Gamma}
\Big(
K+1,\Big(\sum\limits_{j=1}^{p}\frac{1}{j}+1\Big)^{-1} 
\Big).
\end{align*}

\noindent\textbf{Step 2 (Interchanging States across Parallel Chains).}
We sort the Markov chains in descending order by their temperatures $T$. The next step of PTMCMC is interchanging states between adjacent Markov chains. Let $\left( Z^{(T)}, \alpha^{(T)} \right)$ denote the state of the Markov chain indexed by temperature $T$.
Suppose we run $N$ parallel chains with descending temperatures $T_N, T_{N-1}, \ldots, T_2, T_1$, where $T_1 = 1$. Sequentially for each $i = N, N-1, \ldots, 2$, we propose an interchange of states between $\left( Z^{(T_i)},\alpha^{(T_i)} \right)$ and $\left( Z^{(T_{i-1})}, \alpha^{(T_{i-1})} \right)$ and accept the proposal with probability
\begin{align*}
A_{T_i, T_{i-1}}=\min\Big\{ \Big( \frac{p(X \mid Z^{(T_{i-1})})}{p(X \mid Z^{(T_i)})}\Big)^{\frac{1}{T_i}-\frac{1}{T_{i-1}}},1\Big\}.
\end{align*}

\section{Examples}
\label{sec:exap}
\subsection{Simulation Studies}
\label{sec:simulation}

We conduct simulation to examine the convergence of the posterior distribution of $Z_n Z_n^T$ (under IBP mixture prior) to the true similarity matrix $Z_n^* Z_n^{*T}$. 
We consider several true similarity matrices with different sample
sizes $n$'s and explore the contraction of the posterior
distribution. Hereinafter, we suppress the index $n$.

\paragraph{Simulations under the Sparsity Condition of $Z^*$}\mbox{}

In this simulation, the true latent feature matrix, $Z^*=(z_{jk}^*)_{p\times K^*}$, is randomly generated under the sparsity condition in Definition \ref{def:sparsity} (i.e., $m_k\leq s$ for $k = 1, \ldots, K^*$, where $s$ is relatively small compared with $p$).
We set $s=10$, $K^*= 10$ and $p=50, 100, $ or $150$. We use $s/p$ to measure the sparsity (as in Definition \ref{def:sparsity}). 

Once $Z^*$ is generated, the samples $\bx_1, \bx_2, \cdots, \bx_n$ are generated under the sampling model (Equation \ref{eq:sampling_model}) with
\begin{align*}
\bx_i^T \stackrel{i.i.d.}{\sim} N(0, Z^* Z^{*T}+\bm{I}_p),
\end{align*}
where $\bx_i$'s are rows of $X$. For each value of $p$, we conduct 4 simulations, with sample sizes $n = 20, 50, 80$ or $100$.

Given simulated data $X$, we use the proposed PTMCMC algorithm introduced in Section \ref{sec:MCMC} to sample from the posterior distribution $\Pi(Z \mid X)$, setting $K_{\max}^{+}$ in (\ref{ap1}) at $10$. 
We set the number of parallel Markov chains $N = 11$ with geometrically spaced temperatures. Namely, the ratio between adjacent temperatures $\beta \triangleq T_{i+1} / T_i =1.2$, with $T_1=1$. We run $1,000$ MCMC iterations.
The chains converge quickly and mix well based on basic MCMC diagnostics.

We repeat the simulation scheme 40 times, each time generating a new data set with a new random seed and applying the PTMCMC algorithm for inference. We report two summaries in Table \ref{tab1} based on the 40 simulation studies. The entries are the true values of $(n, p)$ (first column), $p/n$ (second column), average (across the 40 simulation studies) of the $1,000$-th MCMC value of $K$ (third column), average residual $\|ZZ^T-Z^*Z^{*T}\|$ (across the 40 simulation studies) based on the $1,000$-th MCMC value of $Z$ (fourth column).

For fixed $p$, $ZZ^T$ converges to $Z^*Z^{*T}$ as sample size increases. When $n=100$, $\|ZZ^T-Z^*Z^{*T}\|$ is very close to $0$ and $K$ is close to the truth  $K^* = 10$.
On the other hand, for fixed $n$, increasing $p$ does not make the posterior distribution of $ZZ^T$ significantly less concentrated at the true $Z^*Z^{*T}$, implying that the inference is robust to high-dimensional matrix of $Z$ as long as the true matrix $Z^*$ is sparse. This verifies the theoretical results we have reported early on. 

\begin{table}[h!]
\begin{center}
	\begin{tabular}{|r|c|cc|}
		\hline
		$(n,p)$& $p/n$& $K_{sd}$& $\|ZZ^T-Z^*Z^{*T}\|_{sd}$ \\
		\hline
		$(20,50)$&$2.5$&$13.550_{2.385}$&$16.349_{5.315}$\\
		$(50,50)$&$1$&$10.700_{0.791}$&$3.444_{2.205}$\\
		$(80,50)$&$0.625$&$10.300_{0.564}$&$0.591_{1.117}$\\
		$(100,50)$&$0.5$&$10.275_{0.506}$&$0.478_{0.951}$\\ \hline
		$(20,100)$&$5$&$12.650_{1.718}$&$12.505_{3.029}$\\
		$(50,100)$&$2$&$10.750_{0.776}$&$3.495_{1.552}$\\
		$(80,100)$&$1.25$&$10.275_{0.506}$&$0.709_{1.315}$\\
		$(100,100)$&$1$&$10.250_{0.494}$&$0.343_{0.734}$\\ \hline
		$(20,150)$&$7.5$&$12.600_{2.073}$&$11.339_{2.618}$\\
		$(50,150)$&$3$&$10.675_{0.797}$&$3.387_{1.876}$\\
		$(80,150)$&$1.875$&$10.282_{0.456}$&$0.472_{0.884}$\\
		$(100,150)$&$1.5$&$10.250_{0.439}$&$0.250_{0.439}$\\		
		\hline
	\end{tabular}
	\caption{Simulation results based on the sample value of $K$ and $Z$ from the $1,000$-th MCMC iteration. Entries in normal size and in the subscript (columns 3 and 4) are the averages and standard deviations of 40 independent simulation studies.}
	\label{tab1}
\end{center}
\end{table}

\paragraph{Sensitivity of Sparsity}\mbox{}

In this part, we use simulation results to demonstrate the effect of sparsity of $Z^*$ in the posterior convergence. We set $p=50$, $K^* = 10$  and $n = 100$, and use different $s=10,15,18,20, 23, 25$, reducing the sparsity of $Z^*$ gradually.   We generate $Z^*$ and $X$ the same way as in the previous simulation. We set the number of parallel Markov chains $N=11$ and $K_{\max}^{+}=10$. To increase the frequency of interchange between adjacent chains, we reduce $\beta$ to 1.15.


\begin{table}[h!]
\begin{center}
\begin{tabular}{|c|c|cc|}
\hline
$s$ & $s/p$& $K_{sd}$& $\|ZZ^T-Z^*Z^{*T}\|_{sd}$ \\
\hline
10 & 0.2   & $10.225_{0.480}$ & $0.200_{0.405}$\\
15 & 0.3   & $10.100_{0.304}$ &  $0.100_{0.304}$\\
18 & 0.36 & $10.225_{0.423}$ & $1.212_{5.454}$ \\
20 & 0.4   & $10.275_{0.554}$ & $1.959_{5.146}$ \\ 
23 & 0.46 & $10.850_{0.802}$ & $20.192_{26.867}$\\
25 & 0.5   & $11.150_{0.736}$  & $24.552_{19.141}$\\
		\hline
	\end{tabular}
\end{center}
\caption{The true$(n,p,K^*)=(100,50,10)$. Simulation results based on the sample value of $K$ and $Z$ from the $1,000$-th MCMC iteration. Entries in normal size and in the subscript (rows 3 and 4) are the average and standard deviations of 40 independent simulation studies.} 
\label{tab:sens}
\end{table}

Table \ref{tab:sens} reports the simulation results.
As $Z^*$ becomes less sparse, the posterior distribution of $Z$ becomes less concentrated on $Z^*$, in terms of both $K$ and $\|ZZ^T-Z^*Z^{*T}\|$. Specifically, for $s\in \{15,20,25\}$, when $s$ increases by 5, $\|ZZ^T-Z^*Z^{*T}\|$ inflates more than 10 fold.

\subsection{Application for Proteomics}
\label{sec:realdata}

We apply the binary latent feature model to the analysis of a reverse phase protein arrays (RPPA) dataset from The Cancer Genome Atlas (TCGA, \url{https://tcga-data. nci. nih. gov/tcga/}) downloaded by TCGA Assembler 2 \citep{wei2017tcga}. 
The RPPA dataset records the levels of protein expression based on incubating a matrix of biological samples on a microarray with specific antibodies that target corresponding proteins  \citep{sheehan2005use, spurrier2008reverse}. 
 We focus on patients categorized as 5 different cancer types, including breast cancer (BRCA), diffuse large B-cell lymphoma (DLBC), glioblastoma multiforme (GBM), clear cell kidney carcinoma (KIRC) and lung adenocarcinoma (LUAD). 
Data of $n = 157$ proteins are available. 
We randomly choose $p = 100$ patients for our analysis, with an equal number of 20 patients for each cancer type. Note that we consider proteins as experimental units and patients as objects in the data matrix with an aim to allocate latent features to patients (not proteins). This will be clear later when we report the inference results.
The rows of the data matrix $X_{n \times p}$ are standardized, with $\sum_{j = 1}^p x_{ij} = 0$ for $i = 1, \ldots, n$.
Since the patients are from different cancer types, we expect the data to be highly heterogeneous and the number of common features that two patients share to be small, i.e the feature matrix is sparse. 

\begin{figure}[h!]
\center
\includegraphics[height = 8.5 cm]{./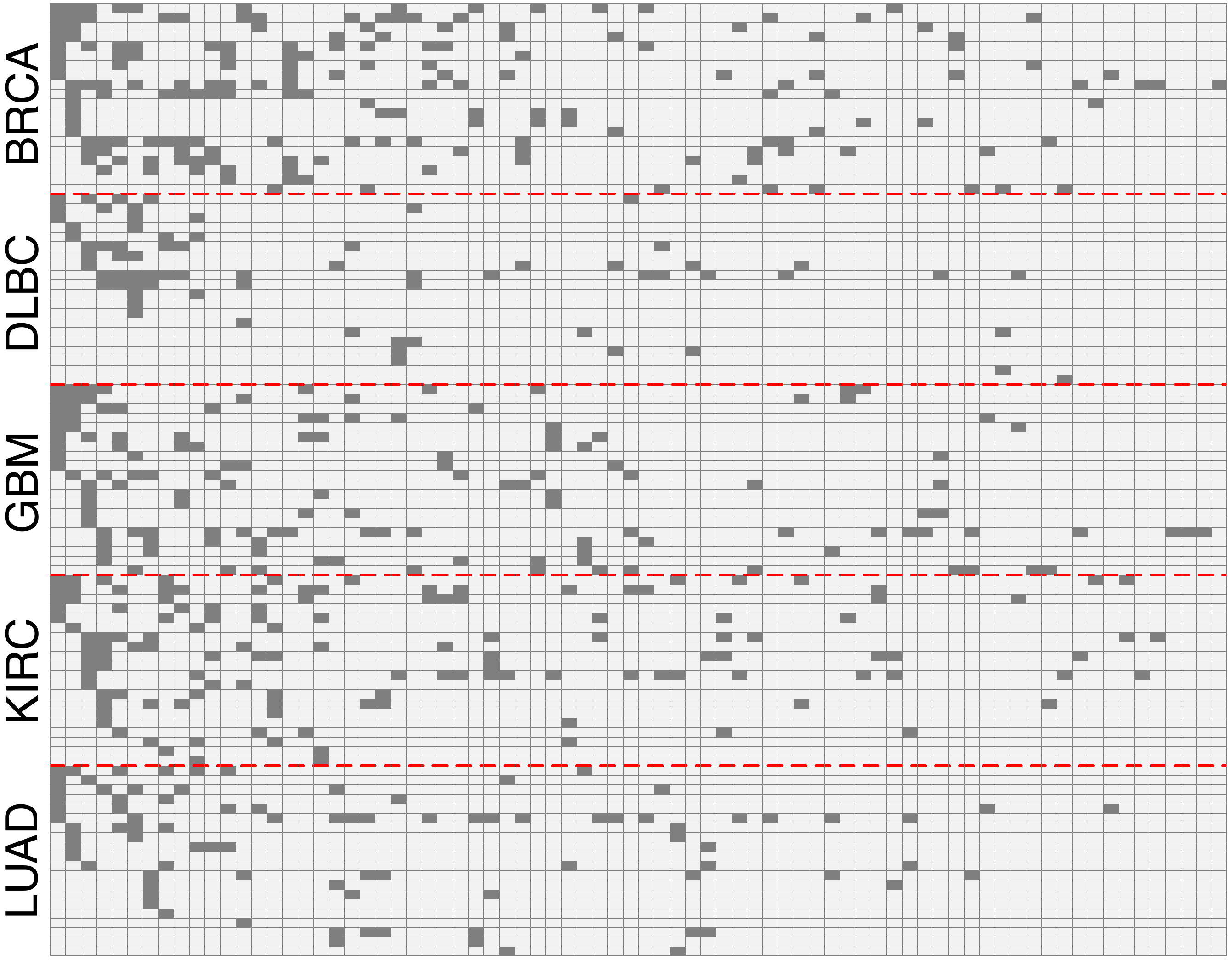}
\caption{The inferred binary feature matrix $\hat{Z}$ for the TCGA RPPA dataset. The dataset consists of 100 patients, with 20 patients for each of the 5 cancer type, BRCA, DLBC, GBM, KIRC and LUAD. A shaded gray rectangle indicates the corresponding patient $j$ possesses feature $k$, i.e. the corresponding matrix element $\hat{Z}_{jk} = 1$. The columns are in descending order of the number of objects possessing each feature. The rows are reordered for better display. 
}
\label{fig:real_data_Z_hat}
\end{figure}

We fit the binary latent feature model under the IBP prior 
$\Pi(Z  \mid  \alpha)$ and $\alpha \sim\text{Gamma}(1,1)$. In addition, rather than fixing $\sigma^2 = \sigma_a^2 = 1$, we now assume inverse-Gamma distribution priors on them, $\sigma^2 \sim \text{IG}(1, 1)$ and $\sigma_a^2 \sim \text{IG}(1, 1)$. We run 1,000 MCMC iterations, as in the simulation studies. We also repeat the MCMC algorithm 3 times with different random seeds and do not observe substantial difference across the runs. We report point estimates $(\hat{Z}, \hat{\alpha}, \hat{\sigma}^2, \hat{\sigma}_a^2) $ according to the maximum a posteriori (MAP) estimate,
\begin{align*}
(\hat{Z}, \hat{\alpha}, \hat{\sigma}^2, \hat{\sigma}_a^2) = \argmax_{(Z, \alpha, \sigma^2, \sigma_a^2)} p(Z, \alpha, \sigma^2, \sigma_a^2 \mid X).
\end{align*}

Figure \ref{fig:real_data_Z_hat} shows the inferred binary feature matrix $\hat{Z}$, where a shaded gray rectangle indicates the corresponding patient $j$ possesses feature $k$. 
For the real data analysis we do not 
know the true latent feature matrix and its sparsity, but can use the estimated $\hat{Z}$ to approximate the truth.
From the $\hat{Z}$ matrix, we find that a feature is possessed by at most 
  $\hat{s} = 31$  patients, and therefore the sparsity of $\hat{Z}$ is about  $\hat{s} / p = 0.31$. 
If the estimated sparsity is close to the truth, according to simulation results (Section \ref{sec:simulation} Table \ref{tab:sens}), the posterior distribution of $Z$ should be highly concentrated at the truth $Z^*$.

\paragraph{Biological Interpretation of the Features}\mbox{}

We report the unique genes for the top 10 proteins that have the largest loading values $\hat{a}_{ki}$ for the five most popular features. That is, for the top five features $k$ possessed by the largest numbers of patients (the first five columns in Figure \ref{fig:real_data_Z_hat}), we report the proteins with the largest $\hat{a}_{ki}$ values.  The $\hat{a}_{ki}$ values are posterior mean from the MCMC samples, in which parameters $a_{ki}$'s are sampled from their full conditional distributions. This additional sampling step for $a_{ki}$'s is added to the proposed PTMCMC algorithm for the purpose of assessing the biological implication of the features. It is a simple Gibbs step as the full conditional distributions of $a_{ki}$'s are known Gaussian distributions. Table \ref{table:app:gene-feature} in the supplement lists these genes and their feature membership. 
We conduct the gene set enrichment analysis (GSEA) in \cite{subramanian2005gene} comparing the genes in each feature with pathways from the Kyoto Encyclopedia of Genes and Genomes (KEGG) database \citep{kanehisa2000kegg}. The GSEA analysis reports back the enriched pathways and the corresponding genes, which is listed in Table \ref{table:app:gene-pathway} in the supplement. We observe the following findings. 

\textbf{Feature 1.} Among all the features, only genes in feature 1 are enriched in  the cell cycle pathway. They are also enriched in the p53 signaling pathway. This indicates that feature 1 might be related to cell death and cell cycle.  
 \textbf{Feature 2.} Feature 2 is enriched in many different types of pathways, which may be caused by its two key gene members: NRAS:4893 (oncogene) and PTEN:5728 (tumor suppressor). These two genes play key roles in the PI3K-Akt signaling pathway and also regulate many other cancer-related pathways.
 \textbf{Feature 3.} Genes in feature 3 are enriched in inflammation related pathways, such as the non-alcoholic fatty liver disease, Hepatitis B, viral carcinogenesis, hematopoietic cell lineage and phagosome pathways. This means feature 3 is mostly related to inflammation. 
\textbf{Feature 4 and 5.}
Genes in features 4 and 5 are enriched in the largest number of pathways that are similar, with the exception of the p53 pathway (enriched with feature 4 but not 5) and the Jak-STAT signaling pathway (enriched with feature 5 but not 4). This indicates that feature 4 is more related to intracellular factors like DNA damage, oxidative stress and activated oncogenes, while feature 5 is more related to extracellular factors such as cytokines and growth factors.

Depending on the possession of the first five features, the patients
in each cancer type can be further divided into potential
molecular subgroups. For example, most BRCA patients possess
features 1, 2 or 3, which indicate that these tumors are related to
cell death and cell cycle (features 1 \& 2), or
inflammation (feature 3) pathways; most of the
GBM and KIRC patients possess features 1, 2, 3 or 4, indicating an
additional subgroup of patients with tumor associated to DNA damage
(feature 4). The DLBC patients are highly heterogeneous, as
many of them do not possess any of the first five main
features. This has been well recognized in the
literature \citep{zhang2013genetic}. Lastly,
LUAD seems to have two subgroups, possessing mostly feature 1 or 2,
respectively. They correspond to cell death and cell cycle functions,
which suggest that these two subgroups of cancer could be related to
abnormal cell death and cell cycle regulation.

We also note that there are other informative features besides the five mentioned above.  For example, feature 16 is only possessed by BRCA patients, in which the top genes include ESR1, AR, GATA3, AKT1, CASP7, ETS1, BCL2, FASN and CCNE2, all of which have been shown closely related to breast cancer (see \cite{clatot2017esr1, cochrane2014role, takaku2015gata3, ju2007akt1, chaudhary2016overexpression, furlan2014ets, dawson2010bcl2, menendez2017fatty,tormo2017role}).

\section{Conclusion and Discussion}
\label{sec:discussion}

Our main contributions in this paper are (1) reducing the requirement on the growth rate of sample size with respect to dimensionality that ensures posterior convergence of IBP mixture or pIBP mixture under proper sparsity condition, (2) proposing an efficient MCMC scheme for sampling from the model, and (3) demonstrating the practical utility of the derived properties through an analysis of an RPPA dataset.  The sparsity condition is mild and interpretable, making real-case applications possible.
This result guarantees the validity of using IBP mixture or pIBP mixture for posterior inference in high dimensional settings theoretically.

There are several directions along which we plan to investigate further.
First, since the assumptions made on the true latent feature matrix $Z_n^*$ are quite mild,  the posterior convergence in Theorem \ref{thm1} only holds when $p_n = o(n)$.
It is of interest whether posterior convergence still holds when $p_n$ increases faster, \eg, $p_n \gg n$. 
As a trade-off, results with a faster-increasing $p_n$ would likely require additional assumptions on $Z_n^*$, such as the Assumption 3.2 (A3) in \cite{pati2014posterior}.
It is also of interest to explore whether the contraction rate in Theorem \ref{thm1} can be further improved with additional assumptions. This is closely related to the problem of minimax rate optimal estimators for $Z_n^*{Z_n^*}^T$, or more broadly, the covariance matrix of random samples, which has been partially addressed in \cite{pati2014posterior}.

Another potential direction for further investigation is to extend the
latent feature model (\ref{eq:lfm}) to a more general latent factor
model, in which the binary matrix $Z$ is replaced with a real-valued
factor matrix $G$. The binary matrix $Z$ is then used to indicate the
sparsity of $G$. See, \eg, \cite{knowles2011nonparametric}. To prove
posterior convergence for such a model, 
Lemma \ref{lem1} needs to be modified 
based on the factor loading matrix, such as Lemma 9.1 in \cite{pati2014posterior}.

Throughout this paper we measure the difference between the similarity matrices by the spectral norm. 
Other matrix norms, such as the Frobenius norm, may be explored.
Our current results focus on the posterior convergence of $Z_n Z_n^T$ rather than $Z_n$ itself due to the identifiability issue of $Z_n$ arose from (\ref{eq:sampling_model}). A future direction is to investigate to what extent can $Z_n^*$ be estimated, and a Hamming distance like measure between the feature matrices can be considered.

Finally, we are working on general hierarchical models that embed sparsity into the model construction.

\bibliographystyle{apalike}
\bibliography{ref_IBP}

\clearpage

\section*{Supplementary Materials}

\beginsupplement

\subsection{Derivation of the PMF of pIBP}
\label{app:deriv_pibp}

In Step 1 of the construction of pIBP, recall
\begin{align*}
P_{\tilde{K}}(\tilde{Z}  \mid  \alpha, \tree) = \prod \limits_{k=1}^{\tilde{K}} \int P(\tilde{\bm z}_k \mid \pi_k, \tree) p(\pi_k \mid \alpha) d \pi_k.
\end{align*}
Here $P(\tilde{\bm z}_k \mid \pi_k, \tree)$ is defined by the universal tree structure $\tree$ for any $k = 1, \ldots, \tilde{K}$ according to \cite{miller2012phylogenetic}.

Denote by $\tilde{\bm{z}}_k = (z_{1k},z_{2k},\cdots,z_{pk})^T$. For each $k$, we treat the tree as a directed graph with random variables at the interior nodes and $z_{jk}$'s at the leaf nodes, $j = 1, \ldots, p$. See Figure \ref{fig:eg_tree2}. Given $\pi_k$, for any variable $x$ at the parent node (including the root node) and any variable $y$ at the child node (including the leaf node, in which case $y = z_{jk}$), if the edge between $x$ and $y$ has length $t$, then
\begin{align}
\begin{split}
&P(y=0 \mid x=0,\pi_k) = \exp(-\gamma_kt), \\
&P(y=1 \mid x=0,\pi_k) = 1 - \exp(-\gamma_kt), \\
&P(y=0 \mid x=1,\pi_k) = 0, \\
&P(y=1 \mid x=1,\pi_k) = 1, 
\end{split}
\label{eq:pIBP_cond_probs}
\end{align}
where $\gamma_k= -\log(1-\pi_k)$. The value of the random variable at the root node is always fixed at 0. The joint distribution of all random variables on the tree (nodes and leaves) is uniquely defined by the conditional probabilities (see Figure \ref{fig:eg_tree2}). So is $P(\tilde{\bm z}_k \mid \pi_k, \tree)$. 
Note that given $\bpi$ and $\tree$, the columns of $\tilde{Z}$ are conditionally independent, i.e. $P_{\tilde{K}}(\tilde{Z} \mid \alpha, \tree)$ is column-exchangeable. If $\tilde{Z}_1$ is generated by permuting columns of $\tilde{Z}$, then $P_{\tilde{K}}(\tilde{Z} \mid \alpha, \tree)=P_{\tilde{K}}(\tilde{Z}_1 \mid \alpha, \tree)$.

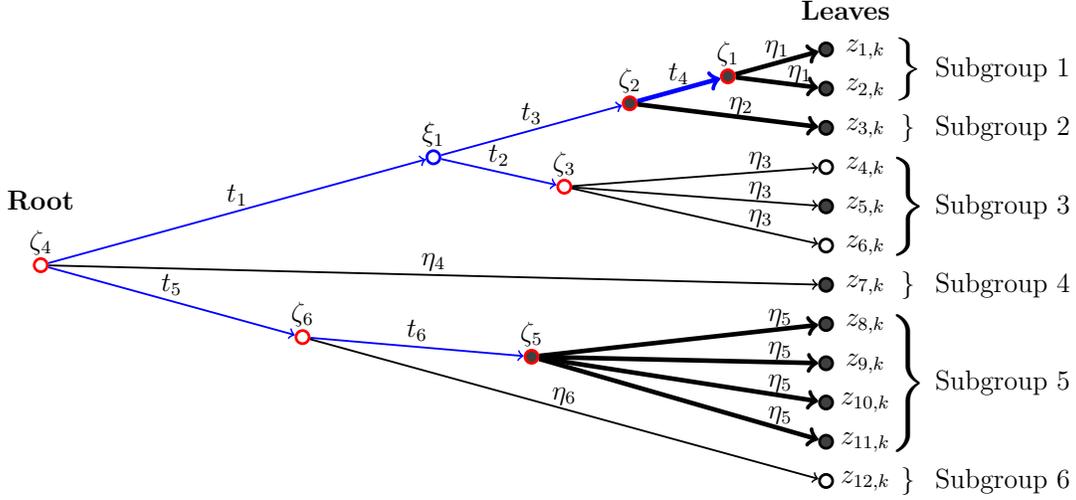
\begin{figure}[h!]
\begin{center}
\scalebox{0.87}{
\begin{tikzpicture}
[every node/.style = {very thick}]
    \node[root, draw = red] at (0, 0)         (root) {};
    
    \node[dummy, draw = blue] at (6, 1.65)   (xi1) {};
    \node[dummy, draw = red, fill = deep-gray] at (10.5, 2.8875)   (zeta1) {};
    \node[dummy, draw = red, fill = deep-gray] at (9, 2.475)   (zeta2) {};
    \node[dummy, draw = red] at (8, 1.2)   (zeta3) {};
    \node[dummy, draw = red, fill = deep-gray] at (7.5, -1.4)   (zeta5) {};
    \node[dummy, draw = red] at (4, -1.1)   (zeta6) {};
    
	\node[dummy, draw = black, fill = deep-gray] at (12, 3.3)   (z1) {};
    \node[dummy, draw = black, fill = deep-gray] at (12, 2.7)   (z2) {};
    \node[dummy, draw = black, fill = deep-gray] at (12, 2.1)   (z3) {};
    \node[dummy, draw = black] at (12, 1.5)   (z4) {};
    \node[dummy, draw = black, fill = deep-gray] at (12, 0.9)   (z5) {};
    \node[dummy, draw = black] at (12, 0.3)   (z6) {};
    \node[dummy, draw = black, fill = deep-gray] at (12, -0.3)   (z7) {};
    \node[dummy, draw = black, fill = deep-gray] at (12, -0.9)   (z8) {};
    \node[dummy, draw = black, fill = deep-gray] at (12, -1.5)   (z9) {};
    \node[dummy, draw = black, fill = deep-gray] at (12, -2.1)   (z10) {};
    \node[dummy, draw = black, fill = deep-gray] at (12, -2.7)   (z11) {};
    \node[dummy, draw = black] at (12, -3.3)   (z12) {};    
    
    \node[env] at (0, 1)    (root_text) {\textbf{Root}};
    \node[env] at (10.5, 3.2375)    (zeta1_rv) {$\zeta_1$};
    \node[env] at (9, 2.825)    (zeta2_rv) {$\zeta_2$};
    \node[env] at (8, 1.55)    (zeta3_rv) {$\zeta_3$};
    \node[env] at (0, 0.35)    (root_rv) {$\zeta_4$};
    \node[env] at (7.5, -1.05) (zeta5_rv) {$\zeta_5$};
    \node[env] at (4, -0.75) (zeta6_rv) {$\zeta_6$};
    \node[env] at (6, 2) (xi1_rv) {$\xi_1$};
    
    \node[env] at (12.3, 3.9) (leaves_text) {\textbf{Leaves}};
    \node[env] at (12.6, 3.3) (z1_rv) {$z_{1, k}$};
    \node[env] at (12.6, 2.7) (z2_rv) {$z_{2, k}$};
    \node[env] at (12.6, 2.1) (z3_rv) {$z_{3, k}$};
    \node[env] at (12.6, 1.5) (z4_rv) {$z_{4, k}$};
    \node[env] at (12.6, 0.9) (z5_rv) {$z_{5, k}$};
    \node[env] at (12.6, 0.3) (z6_rv) {$z_{6, k}$};
    \node[env] at (12.6, -0.3) (z7_rv) {$z_{7, k}$};
    \node[env] at (12.6, -0.9) (z8_rv) {$z_{8, k}$};
    \node[env] at (12.6, -1.5) (z9_rv) {$z_{9, k}$};
    \node[env] at (12.6, -2.1) (z10_rv) {$z_{10, k}$};
    \node[env] at (12.6, -2.7) (z11_rv) {$z_{11, k}$};
    \node[env] at (12.6, -3.3) (z12_rv) {$z_{12, k}$};
    
    \node[env] at (3, 1.075) (edge1) {$t_1$};
    \node[env] at (7, 1.675) (edge2) {$t_2$};
    \node[env] at (7.5, 2.3125) (edge3) {$t_3$};
    \node[env] at (9.75, 2.93125) (edge4) {$t_4$};
    \node[env] at (2, -0.3) (edge5) {$t_5$};
    \node[env] at (5.75, -1) (edge6) {$t_6$};
    
    \node[env] at (11.25, 3.3) (eta1_1) {$\eta_1$};
    \node[env] at (11.6, 2.93) (eta1_2) {$\eta_1$};
    \node[env] at (10.7, 2.4375) (eta2) {$\eta_2$};
    
    \node[env] at (11, 1.6) (eta3_1) {$\eta_3$};
    \node[env] at (11, 1.15) (eta3_2) {$\eta_3$};
    \node[env] at (11, 0.72) (eta3_3) {$\eta_3$};
    
    \node[env] at (6, 0.05) (eta4) {$\eta_4$};
    
    \node[env] at (11.3, -0.8) (eta5_1) {$\eta_5$};
    \node[env] at (11.3, -1.3) (eta5_2) {$\eta_5$};
    \node[env] at (11.3, -1.8) (eta5_3) {$\eta_5$};
    \node[env] at (11.3, -2.3) (eta5_4) {$\eta_5$};
    
    \node[env] at (8, -2) (eta6) {$\eta_6$};

    \draw[draw = blue, ->, line width = 0.8pt] (root) -- (xi1);
    \draw[draw = blue, ->, line width = 0.8pt] (xi1) -- (zeta2);
    \draw[draw = blue, ->, line width = 2pt] (zeta2) -- (zeta1);
    \draw[->, line width = 2pt] (zeta1) -- (z1);
    \draw[->, line width = 2pt] (zeta1) -- (z2);
    \draw[->, line width = 2pt] (zeta2) -- (z3);
    \draw[draw = blue, ->, line width = 0.8pt] (xi1) -- (zeta3);
    \draw[->, line width = 0.8pt] (zeta3) -- (z4);
    \draw[->, line width = 0.8pt] (zeta3) -- (z5);
    \draw[->, line width = 0.8pt] (zeta3) -- (z6);
    \draw[->, line width = 0.8pt] (root) -- (z7);
    \draw[->, line width = 2pt] (zeta5) -- (z8);
    \draw[->, line width = 2pt] (zeta5) -- (z9);
    \draw[->, line width = 2pt] (zeta5) -- (z10);
    \draw[->, line width = 2pt] (zeta5) -- (z11);
    \draw[draw = blue, ->, line width = 0.8pt] (zeta6) -- (zeta5);
    \draw[draw = blue, ->, line width = 0.8pt] (root) -- (zeta6);
    \draw[->, line width = 0.8pt] (zeta6) -- (z12);
    
    \node[env] at (14.7, 3) (subgroup1) {Subgroup 1};
    \node[env, font=\fontsize{23}{0}\selectfont] at (13.25, 3) (subgroup1_1) {{$\big{\}}$}};
    
    \node[env] at (14.7, 2.1) (subgroup2) {Subgroup 2};
    \node[env, font=\fontsize{13}{0}\selectfont] at (13.25, 2.1) (subgroup2_1) {{$\}$}};
    
    \node[env] at (14.7, 0.9) (subgroup3) {Subgroup 3};
    \node[env, font=\fontsize{24}{0}\selectfont] at (13.25, 0.9) (subgroup3_1) {{$\Big{\}}$}};
    
    \node[env] at (14.7, -0.3) (subgroup4) {Subgroup 4};
    \node[env, font=\fontsize{13}{0}\selectfont] at (13.25, -0.3) (subgroup4_1) {{$\}$}};
    
    \node[env] at (14.7, -1.8) (subgroup5) {Subgroup 5};
    \node[env, font=\fontsize{20}{0}\selectfont] at (13.25, -1.8) (subgroup5_1) {{$\Bigg{\}}$}};
    
    \node[env] at (14.7, -3.3) (subgroup6) {Subgroup 6};
    \node[env, font=\fontsize{13}{0}\selectfont] at (13.25, -3.3) (subgroup6_1) {{$\}$}};
    
\end{tikzpicture}
}
\end{center}
\caption{An example of a tree structure $\tree$, which is a directed graph with random variables ($\xi$'s, $\zeta$'s and $z$'s) at the nodes. 
In particular, entries of the $k$-th column of $Z$, $z_{jk}$'s, are at the leaves.
The length of the edge between $\zeta_4$ and $\xi_1$ is $t_1$, and similarly for $t_2$ and $\eta_3$. The total edge length from the root to any leaf is always equal to 1. For example, $t_1+t_2+\eta_3=1$. Thus $S(\tree)\leq p$.
Conditional probabilities \eqref{eq:pIBP_cond_probs}  imply that once the variable at a node becomes $1$, all its child nodes (including leaf nodes) become 1. Solid circles represent that the variables at the corresponding positions are 1. Hollow circles denote 0's. Bold segments are drawn for the edges connecting 1's. 
The leaves are divided into 6 subgroups according to their common parent nodes $\zeta_l$'s.
}
\label{fig:eg_tree2}
\end{figure}

In Step 2, for $Z \in \zsobm_p$ with $K$ columns,
\begin{align*}
\Pi_{\tilde{K}}(Z  \mid  \alpha, \tree) \triangleq \sum_{\tilde{Z} \in \tsobm_{p \times \tilde{K}}: G(\tilde{Z}) = Z} P_{\tilde{K}} (\tilde{Z} \mid  \alpha, \tree) = \binom{\tilde{K}}{K}P_{\tilde{K}}(\tilde{Z}_* \mid \alpha, \tree),
\end{align*}
where $\tilde{Z}_*$ is such that the first $K$ columns of $\tilde{Z}_*$ are exactly the same with $Z$ and the rest $\tilde{K}-K$ columns are all $\mathbf{0}$. For notational simplicity we still denote $\tilde{Z}_*$ by $\tilde{Z}$. 

In Step 3, to obtain the pmf $\Pi(Z \mid \alpha, \tree)$, we take the limit $\lim_{\tilde{K} \rightarrow \infty} \Pi_{\tilde{K}}(Z  \mid  \alpha, \tree)$.
We consider two cases (1) $Z \in \sobm_p$ and (2) $Z = \bm 0$.

We first consider $Z \in \sobm_p$.
Denote by $Z = (\bm{z}_1,\bm{z}_2,\cdots,\bm{z}_K)$, we have
\begin{align*}
&\Pi_{\tilde{K}}(Z \mid \alpha, \tree)  \\
= {}&\frac{\tilde{K}!}{K!(\tilde{K}-K)!}P_{\tilde{K}}(\tilde{Z} \mid \alpha, \tree)  \\
= {}&\underbrace{\frac{\tilde{K}!}{K!(\tilde{K}-K)!}\prod\limits_{k=1}^{K}\int P(\bm{z}_k \mid \pi_k, \tree) p(\pi_k \mid \alpha)\text{d}\pi_k}_{P_{1, \tilde{K}}(\tilde{Z} \mid \alpha, \tree)} \cdot
\underbrace{\prod\limits_{k>K}^{\tilde{K}}\int P(\mathbf{0} \mid \pi_k, \tree) p(\pi_k \mid \alpha)\text{d}\pi_k}_{P_{2, \tilde{K}}(\tilde{Z} \mid \alpha, \tree)}.
\end{align*}

We consider $P_{1, \tilde{K}}(\tilde{Z} \mid \alpha, \tree)$ and $P_{2, \tilde{K}}(\tilde{Z} \mid \alpha, \tree)$ separately. For $P_{1, \tilde{K}}(\tilde{Z} \mid \alpha, \tree)$, we define 
\begin{align}
\lambda(\bm{z}_k, \tree) \triangleq 
\lim_{\tilde{K} \rightarrow \infty} \int_0^1 P(\bm{z}_k \mid \pi_k, \tree)\pi_k^{\frac{\alpha}{\tilde{K}}-1}\text{d}\pi_k  = 
\int_0^1P(\bm{z}_k \mid \pi_k, \tree)\pi_k^{-1}\text{d}\pi_k,
\label{eq:lambda_k}
\end{align}
where $\lambda(\bm{z}_k, \tree) < \infty$ for $\bm{z}_k \neq \bm 0$.
Thus,
\begin{align}
\lim_{\tilde{K} \rightarrow \infty} P_{1, \tilde{K}}(\tilde{Z} \mid \alpha, \tree) 
&= \lim_{\tilde{K} \rightarrow \infty} \frac{\tilde{K}!}{K!(\tilde{K}-K)!}\cdot (\frac{\alpha}{\tilde{K}})^{K}\cdot \prod\limits_{k=1}\limits^{K}\int_0^1P(\bm{z}_k \mid \pi_k, \tree)\pi_k^{\frac{\alpha}{\tilde{K}}-1}\text{d}\pi_k  \notag \\
&= \lim_{\tilde{K} \rightarrow \infty} \frac{\tilde{K}!}{(\tilde{K}-K)!\tilde{K}^{K}}\cdot\frac{\alpha^K}{K!}\cdot\prod\limits_{k=1}\limits^{K}\int_0^1 P(\bm{z}_k \mid \pi_k, \tree)\pi_k^{\frac{\alpha}{\tilde{K}}-1}\text{d}\pi_k  \notag\\
& = 1 \cdot \frac{\alpha^K}{K!} \cdot \prod\limits_{k=1}\limits^{K} \lambda(\bm{z}_k, \tree). \label{eq:P_1_K_tilde}
\end{align}

For $P_{2, \tilde{K}}(\tilde{Z} \mid \alpha, \tree)$, we need to calculate $P(\bm 0 \mid \pi_k, \tree)$. In the case that the variables at the leaves $\bm{z}_k = \bm 0$, the random variables at all nodes of the tree must take value $0$ (see Figure \ref{fig:eg_tree2}). Based on equations \eqref{eq:pIBP_cond_probs}, we calculate
\begin{align*}
P(\mathbf{0} \mid \pi_k, \tree)=\prod_{t \in \text{edges of $\tree$}} \exp(-\gamma_k t)=\exp(-\gamma_k \sum_{t \in \text{edges of $\tree$}} t)=(1-\pi_k)^{S(\tree)}
\end{align*}
where $\gamma_k = -\log(1 - \pi_k)$ and $S(\tree)$ is defined as the total edge lengths of $\tree$. Accordingly,
\begin{align*}
\int P(\mathbf{0} \mid \pi_k, \tree) p(\pi_k \mid \alpha)\text{d}\pi_k =
\int P(\mathbf{0} \mid \pi, \tree) p(\pi \mid \alpha)\text{d}\pi=\frac{\alpha}{\tilde{K}}\int_0^1 (1-\pi)^{S(\tree)}\pi^{\frac{\alpha}{\tilde{K}}-1}\text{d}\pi.
\end{align*}
Thus
\begin{align*}
P_{2, \tilde{K}}(\tilde{Z} \mid \alpha, \tree) &= 
\Big(\int P(\mathbf{0} \mid \pi, \tree) p(\pi \mid \alpha)\text{d}\pi\Big)^{\tilde{K}-K} \\
&=\Big(\frac{\alpha}{\tilde{K}}\int_0^1 (1-\pi)^{S(\tree)}\pi^{\frac{\alpha}{\tilde{K}}-1}\text{d}\pi\Big)^{\tilde{K}-K}\\
&=\Big[\frac{\alpha}{\tilde{K}}\text{Beta}\Big(\frac{\alpha}{\tilde{K}},S(\tree)+1\Big)\Big]^{\tilde{K}-K}\\
&=\Big[\frac{\Gamma(\frac{\alpha}{\tilde{K}}+1)\Gamma(S(\tree)+1)}{\Gamma(\frac{\alpha}{\tilde{K}}+S(\tree)+1)}\Big]^{\tilde{K}-K}\\
&=\exp\Big[(\tilde{K}-K)\log\Big(\frac{\Gamma(\frac{\alpha}{\tilde{K}}+1)\Gamma(S(\tree)+1)}{\Gamma(\frac{\alpha}{\tilde{K}}+S(\tree)+1)}\Big)\Big],
\end{align*}
where
\begin{align*}
\log(\frac{\Gamma(\frac{\alpha}{\tilde{K}}+1)\Gamma(S(\tree)+1)}{\Gamma(\frac{\alpha}{\tilde{K}}+S(\tree)+1)})
=&\log(1+\frac{\Gamma(\frac{\alpha}{\tilde{K}}+1)\Gamma(S(\tree)+1)-\Gamma(\frac{\alpha}{\tilde{K}}+S(\tree)+1)}{\Gamma(\frac{\alpha}{\tilde{K}}+S(\tree)+1)})\\
\sim&\frac{\Gamma(\frac{\alpha}{\tilde{K}}+1)\Gamma(S(\tree)+1)-\Gamma(\frac{\alpha}{\tilde{K}}+S(\tree)+1)}{\Gamma(\frac{\alpha}{\tilde{K}}+S(\tree)+1)}\\
\sim&\frac{\Gamma(\frac{\alpha}{\tilde{K}}+1)\Gamma(S(\tree)+1)-\Gamma(\frac{\alpha}{\tilde{K}}+S(\tree)+1)}{\Gamma(S(\tree)+1)}\\
=&\Gamma(\frac{\alpha}{\tilde{K}}+1)-\frac{\Gamma(\frac{\alpha}{\tilde{K}}+S(\tree)+1)}{\Gamma(S(\tree)+1)}.
\end{align*}
Here $A(\tilde{K}) \sim B(\tilde{K})$ means that $A(\tilde{K}) / B(\tilde{K})\rightarrow 1$ as $\tilde{K} \rightarrow \infty$.
Recall that $\psi(x)=\Gamma^{'}(x) / \Gamma(x)$ denote the digamma function. We have
\begin{align*}
&\Gamma(\frac{\alpha}{\tilde{K}}+1)-\frac{\Gamma(\frac{\alpha}{\tilde{K}}+S(\tree)+1)}{\Gamma(S(\tree)+1)}=-\left(\psi(S(\tree)+1)-\psi(1)\right)\frac{\alpha}{\tilde{K}}+o(\frac{1}{\tilde{K}}).
\end{align*}
Finally,
\begin{align}
\lim_{\tilde{K} \rightarrow \infty} P_{2, \tilde{K}}(\tilde{Z} \mid \alpha, \tree) &= 
\lim_{\tilde{K} \rightarrow \infty} \Big(\int P(\mathbf{0} \mid \pi, \tree) p(\pi \mid \alpha)\text{d}\pi\Big)^{\tilde{K}-K} 
\notag \\
&= \lim_{\tilde{K} \rightarrow \infty} \exp\Big[(\tilde{K}-K)\log\Big(\frac{\Gamma(\frac{\alpha}{\tilde{K}}+1)\Gamma(S(\tree)+1)}{\Gamma(\frac{\alpha}{\tilde{K}}+S(\tree)+1)}\Big)\Big]
\notag \\
&= \exp\{-\left(\psi(S(\tree)+1)-\psi(1)\right)\alpha\}. \label{eq:P_2_K_tilde}
\end{align}
as $\tilde{K}\rightarrow \infty$.

Combining equations \eqref{eq:P_1_K_tilde} and \eqref{eq:P_2_K_tilde}, we get
\begin{align*}
\Pi(Z \mid \alpha, \tree) &= \lim_{\tilde{K} \rightarrow \infty} \Pi_{\tilde{K}}(Z  \mid  \alpha, \tree) \\
&= \exp\{-\left(\psi(S(\tree)+1)-\psi(1)\right)\alpha\}  \frac{\alpha^K}{K!}  \prod\limits_{k=1}\limits^{K} \lambda(\bm{z}_k, \tree),
\end{align*}
for $Z \in \sobm_p$.

We then consider $Z = \bm 0$, in which case $\tilde{Z}$ is a matrix with $\tilde{K}$ all zero columns. We have
\begin{align*}
\Pi_{\tilde{K}}(Z \mid \alpha, \tree)=&\prod\limits_{k=1}^{\tilde{K}}\int P(\mathbf{0} \mid \pi_k, \tree) p(\pi_k \mid \alpha) \text{d}\pi_k\\
=&\Big(\int P(\mathbf{0} \mid \pi, \tree) p(\pi \mid \alpha)\text{d}\pi\Big)^{\tilde{K}}.
\end{align*}
According to equation \eqref{eq:P_2_K_tilde}, we get
\begin{align*}
\Pi(Z \mid \alpha, \tree) &= \lim_{\tilde{K} \rightarrow \infty} \Pi_{\tilde{K}}(Z  \mid  \alpha, \tree) \\
&= \exp\{-\left(\psi(S(\tree)+1)-\psi(1)\right)\alpha\} ,
\end{align*}
for $Z = \bm 0$.

To summarize,

\begin{align*}
\Pi(Z \mid \alpha, \tree) =\begin{cases}
\exp\{-\left(\psi(S(\tree)+1)-\psi(1)\right)\alpha\}, &Z=\bm{0};\\ 
\exp\{-\left(\psi(S(\tree)+1)-\psi(1)\right)\alpha\}\frac{\alpha^K}{K!}\prod\limits_{k=1}^K\lambda(\bm{z}_k, \tree), & Z\in \sobm_p.
\end{cases}
\end{align*}

\subsection{Proof of Lemma \ref{lem1}}
\label{app:lower_bound}

For the proof of Lemma \ref{lem1}, we first introduce some definition and notation. 
We say two leaf nodes belong to the same \emph{subgroup}, if they share the same parent node.
Let $L$ denote the total number of subgroups. Denote by node$_l$ the common parent node for subgroup $l$, $M_l$ the number of leaf nodes in subgroup $l$, and $\eta_l$ the length between node$_l$ and any leaf node in subgroup $l$, $l = 1, \ldots, L$. 
We have $p = \sum_{l = 1}^L M_l$.  
Furthermore, denote by $m_{kl}$ the number of leaf nodes in column $k$ that belong to subgroup $l$ and equal to 1, i.e. $m_{kl} = \sum_{j \in  \text{subgroup $l$}} z_{jk}$. We have $\sum_{l = 1}^L m_{kl} = m_k$. See Figure \ref{fig:eg_tree2} for an example.


Next, we give a preparatory lemma.
\begin{lemma}
\label{lem:lower_bound_lambda_k}
Consider $\lambda_k = \lambda(\bm{z}_k, \tree) = \int_{0}^1 P(\bm z_k \mid \pi_k, \tree) \pi_k^{-1} d\pi_k$ (see Equation \eqref{eq:lambda_k}). Each $\lambda_k$ has the following lower bound
\begin{align*}
\lambda_k \geq \int_0^1 (1-\pi_k)^{S(\tree)-\sum\limits_{l}\eta_l m_{kl}} \prod_l \left( [1-(1-\pi_k)^{\eta_l}]^{m_{kl}}\right) \pi_k^{-1}\text{d}\pi_k,
\end{align*}
where $S(\tree)$ denotes the total length of tree structure $\tree$ (see example in Figure \ref{fig:eg_tree}).
\end{lemma}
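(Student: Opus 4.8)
The plan is to bound $P(\bm{z}_k \mid \pi_k, \tree)$ from below by the probability of a single, conveniently chosen configuration of the hidden (interior) nodes, and then integrate against $\pi_k^{-1}$. Recall that $P(\bm{z}_k \mid \pi_k, \tree)$ is obtained by summing the joint probability over all $0/1$ assignments to the interior nodes that are consistent with the observed leaf vector $\bm{z}_k$. Since every term in this sum is nonnegative, retaining a single term already yields a valid lower bound. I would choose the configuration in which every interior node (all of the $\zeta$'s and $\xi$'s, i.e.\ everything above the leaves, including each subgroup parent node$_l$) is set to $0$. Because the root is fixed at $0$ and, by \eqref{eq:pIBP_cond_probs}, an edge of length $t$ joining a $0$-parent to a $0$-child contributes probability $\exp(-\gamma_k t) = (1-\pi_k)^t$, this configuration is always consistent with any $\bm{z}_k$ and has strictly positive probability.

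Next I would compute the probability of this configuration in two pieces. First, forcing all interior nodes to $0$ contributes the product of $(1-\pi_k)^{t}$ over all interior edges, which equals $(1-\pi_k)^{S(\tree) - \sum_l M_l \eta_l}$, since the interior edges are exactly the edges of $\tree$ other than the $p$ leaf edges, and the leaf edges have total length $\sum_{l=1}^L M_l \eta_l$ (subgroup $l$ has $M_l$ leaves each attached by an edge of length $\eta_l$). Second, conditionally on each subgroup parent node$_l$ being $0$, the $M_l$ leaves in subgroup $l$ are independent, each equal to $1$ with probability $1-(1-\pi_k)^{\eta_l}$ and to $0$ with probability $(1-\pi_k)^{\eta_l}$; hence subgroup $l$ contributes $[1-(1-\pi_k)^{\eta_l}]^{m_{kl}}\,[(1-\pi_k)^{\eta_l}]^{M_l-m_{kl}}$. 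Multiplying the two pieces and collecting the powers of $(1-\pi_k)$ gives exponent $S(\tree) - \sum_l M_l \eta_l + \sum_l \eta_l (M_l - m_{kl}) = S(\tree) - \sum_l \eta_l m_{kl}$, so the configuration probability equals $(1-\pi_k)^{S(\tree)-\sum_l \eta_l m_{kl}} \prod_l [1-(1-\pi_k)^{\eta_l}]^{m_{kl}}$, which is precisely the integrand of the claimed bound.

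Finally, since $P(\bm{z}_k \mid \pi_k, \tree)$ dominates this single-configuration probability for every $\pi_k \in (0,1)$, and the weight $\pi_k^{-1}$ is nonnegative, integrating the inequality over $(0,1)$ yields exactly the asserted lower bound for $\lambda_k$. The only real subtlety, and the step I would be most careful about, is the bookkeeping that splits $S(\tree)$ into interior-edge length plus leaf-edge length and then recombines the $(1-\pi_k)^{\eta_l}$ factors coming from the $M_l - m_{kl}$ ``off'' leaves with the interior contribution to produce the clean exponent $S(\tree)-\sum_l \eta_l m_{kl}$; everything else is monotonicity of the integral together with the conditional-independence factorization already built into the generative description of the tree.
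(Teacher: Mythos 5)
Your proposal is correct and matches the paper's proof essentially step for step: the paper likewise lower-bounds $P(\bm{z}_k \mid \pi_k, \tree)$ by the probability of the single event $\Omega$ that all interior nodes are $0$, computes $P(\Omega \mid \pi_k, \tree) = (1-\pi_k)^{S(\tree)-\sum_l \eta_l M_l}$ and $P(\bm{z}_k \mid \Omega, \pi_k, \tree) = \prod_l [1-(1-\pi_k)^{\eta_l}]^{m_{kl}}(1-\pi_k)^{\eta_l(M_l-m_{kl})}$, and combines the exponents exactly as you do before integrating. No gaps.
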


\begin{proof}[Proof of Lemma \ref{lem:lower_bound_lambda_k}]

Let $\Omega$ denote the event that the variables at all interior nodes of the tree take value $0$. For example, in Figure \ref{fig:eg_tree2}, all interior nodes (in blue and red) should take value $0$ (hollow circle). We have 
\begin{align*}
P(\bm{z}_k \mid \pi_k, \tree) &\geq  P(\Omega \mid \pi_k, \tree)\times P(\bm{z}_k \mid \Omega, \pi_k, \tree)\\
& = (1-\pi_k)^{S(\tree)-\sum\limits_{l}\eta_l M_l} \times
\prod_{l}\Big([1-(1-\pi_k)^{\eta_l}]^{m_{kl}} 
(1 - \pi_k)^{\eta_l(M_l - m_{kl})}\Big)\\
& = (1-\pi_k)^{ S(\tree) - \sum\limits_{l}\eta_l m_{kl}}\prod\limits_l \Big( [1-(1-\pi_k)^{\eta_l}]^{m_{kl}}\Big),
\end{align*}
which implies that
\begin{align*}
\lambda_k \geq \int_0^1 (1-\pi_k)^{S(\tree)-\sum\limits_{l}\eta_l m_{kl}}\prod\limits_l \left( [1-(1-\pi_k)^{\eta_l}]^{m_{kl}}\right) \pi_k^{-1}\text{d}\pi_k.
\end{align*}
\end{proof}

\begin{proof}[Proof of Lemma \ref{lem1}]
For notational simplicity, we omit the index $n$ in $K_{n}^{*}$, $p_n$, $s_n$, $m_{kn}$, $Z_n$, $Z_{n}^{*}$, etc., in the following proof. We use $C$, $C_1, C_2, \ldots, C_{7}$ to represent positive constants.
It suffices to prove the result for pIBP, as IBP is a special case of pIBP, in which the length between each leaf node and its parent node (the root) is always $1$.

We first consider $Z^* \in \sobm_{p \times K^{*}}$, \ie, none of the columns of $Z^*$ consist of all 0's.
Based on the assumption in case (2), $\eta_l \geq \eta_0$, $\forall\ 1\leq l\leq L$ and $\forall$ $n$. Lemma \ref{lem:lower_bound_lambda_k} implies that
\begin{align*}
\lambda_k \geq &\int_0^1 (1-\pi_k)^{S(\tree)-\sum\limits_{l}\eta_l m_{kl}} [1-(1-\pi_k)^{\eta_0}]^{\sum\limits_{l} m_{kl}}\pi_k^{-1}\text{d}\pi_k\\
\geq &\int_0^1 (1-\pi_k)^{S(\tree)} [1-(1-\pi_k)^{\eta_0}]^{m_k}\pi_k^{-1} \text{d}\pi_k \quad \left( \text{Let } u = (1-\pi_k)^{1-\eta_0} \right)\\
= {}&\int_0^1 u^{\frac{S(\tree) }{1-\eta_0}}(1-u)^{m_k}(1-u^{\frac{1}{1-\eta_0}})^{-1}\frac{1}{1-\eta_0}u^{\frac{\eta_0}{1-\eta_0}}\text{d}u\\
\geq {}&\frac{1}{1-\eta_0}\int_0^1 u^{C_1 S(\tree)}(1-u)^{m_k}\text{d}u \\
\geq {}&\frac{1}{1-\eta_0}\int_0^1 u^{C_1 p}(1-u)^{m_k}\text{d}u,
\end{align*}
for some positive constant $C_1$, considering that $S(\tree)\leq p$ (see Figure \ref{fig:eg_tree2} for an explanation).

Given $m_k\leq s$, $\int_0^1 u^{C_1p}(1-u)^{m_k}\text{d}u\geq \int_0^1 u^{C_1p}(1-u)^{s}\text{d}u$. Thus we have
\begin{align*}
\lambda_k\geq C_2 \int_0^1 u^{C_1 p}(1-u)^{s}\text{d}u = C_2 \ \text{Beta}(C_1 p+1,s+1)=C_2 \frac{\Gamma(C_1 p+1)\Gamma(s+1)}{\Gamma(C_1 p+s+2)}.
\end{align*}

By Stirling's formula, we get
\begin{align*}
\lambda_k \geq&C_2 \frac{\Gamma(C_1 p+1)\Gamma(s+1)}{\Gamma(C_1 p+s+2)}\\
\sim {}& C_2 \frac{\sqrt{2\pi C_1 p}\left( \frac{C_1 p}{e}\right) ^{C_1 p}\sqrt{2\pi s}\left( \frac{s}{e}\right) ^{s}}{\sqrt{2\pi (C_1 p+s+1)}\left( \frac{C_1 p+s+1}{e}\right) ^{C_1 p+s+1}}\\
\sim {}& C_3\frac{(C_1 p)^{C_1 p}s^{s+\frac{1}{2}}}{(C_1 p+s+1)^{C_1 p+s+1}} \\
\geq {}& C_3 \left( \frac{C_1 p}{C_1 p+s+1}\right) ^{C_1 p}\left( \frac{1}{C_1 p+s+1}\right)^{s+1},
\end{align*}
considering $s/p\rightarrow 0$ and $s \geq 1$.
Since $s/p\rightarrow 0$, $\exists\ C_4 > 0$ and $n_0\in \mathbb{N}^+$ such that $\log\left( 1-\frac{s+1}{C_1 p+s+1}\right)\geq -C_4 \frac{s+1}{C_1 p+s+1}$ as $n\geq n_0$. So
\begin{align*}
\left( \frac{C_1 p}{C_1 p+s+1}\right) ^{C_1p} = \exp\left( C_1p\log\left( 1-\frac{s+1}{C_1 p+s+1}\right) \right)\geq\exp(-C_5 s).
\end{align*}
Thus,
\begin{align*}
\lambda_k\geq C_3 \exp(-C_5 s)\left( \frac{1}{C_1 p+s+1}\right)^{s+1}\geq \exp(-C_6 s\log(p+1)),
\end{align*}
for $n\geq n_0$. And $\lambda_k\geq \exp(-C_6 s\log(p+1))$ also holds for $n< n_0$ by choosing a  sufficiently large $C_6$.

For pIBP mixture,
\begin{align*}
\Pi(Z=Z^*)=\Big( \psi\Big( S(\tree)+1\Big)-\psi(1)+1\Big) ^{-(K^*+1)}\prod_{k=1}^{K^*}\lambda_k,
\end{align*}
and for the digamma function $\psi(\cdot)$, we have
\begin{align*}
\psi(S(\tree)+1)-\psi(1)+1=\log(S(\tree))+O(1)\leq \log p + O(1).
\end{align*}
Therefore, $$\log (\psi(S(\tree)+1)-\psi(1)+1)\leq C_7\max\{\log \log (p+1),1\}.$$
Thus,
\begin{align*}
(\psi(S(\tree)+1)-\psi(1)+1) ^{-(K^*+1)}= {}& \exp\{-(K^*+1)\log(\psi(S(\tree)+1)-\psi(1)+1)\}\\
\geq {}& \exp(-2C_7 K^*\max\{\log \log (p+1),1\}),
\end{align*}
and we finally have
\begin{align}
\Pi(Z=Z^*)\geq &\exp(-2C_7 K^*\max\{\log \log (p+1),1\})\left( \exp(-C_6 s\log (p+1))\right) ^{K^*} \nonumber\\
\geq &\exp(-CsK^*\log (p+1)), \label{eq:lower_bound}
\end{align}
for some positive constant $C$.

Next, we consider $Z_{n}^{*} = \bm 0$.
We immediately have 
\begin{align*}
\Pi(Z_n=Z_{n}^{*}) =&\Big(\psi(S(\tree)+1)-\psi(1)+1\Big)^{-1}\\\geq  &\exp(-C_7 \max\{\log \log (p+1),1\})\\
\geq &\exp(-C\log(p+1)).
\end{align*}

\end{proof}

\subsection{Proof of Theorem \ref{thm1}}
For the proof of Theorem \ref{thm1}, we first introduce two preparatory lemmas from the literature.

\begin{lemma}[Theorem 1 in \citealt{chen2016posterior}]\label{Lem1}
Let $\sobm$ be a collection of binary matrices that contains $Z^*$. Consider a family of probability measures indexed by $Z \in \sobm$, i.e., $\{\mu_{Z}: Z \in \sobm\}$. For any subset $U$ of $\sobm$ and any testing function $\phi$ based on $X$,
\begin{align*}
E_{Z^*}[\Pi(U \mid X)]\leq E_{Z^*} (\phi) + \frac{1}{\Pi(ZZ^{T}=Z^*Z^{*T})}\sup\limits_{Z\in U}E_Z
(1-\phi),
\end{align*}
where $E_Z(\cdot)$ means taking expectation in the case where $X\sim\mu_Z$.
\end{lemma}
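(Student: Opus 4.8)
The plan is to prove this via the standard split-by-test argument, with the one model-specific ingredient being the way the posterior normalizer is bounded below. First I would decompose the expected posterior mass of $U$ according to the test $\phi$. Since $0\le\phi\le 1$ and $\Pi(U\mid X)\le 1$,
\[
E_{Z^*}[\Pi(U \mid X)] = E_{Z^*}[\phi\,\Pi(U\mid X)] + E_{Z^*}[(1-\phi)\,\Pi(U\mid X)] \le E_{Z^*}[\phi] + E_{Z^*}[(1-\phi)\,\Pi(U\mid X)].
\]
The first summand is already the Type-I-error term in the claimed bound, so all the remaining work is in controlling the second summand.

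For that term I would write the posterior explicitly as $\Pi(U\mid X)=\big(\int_U p(X\mid Z)\,d\Pi(Z)\big)\big/\big(\int_{\sobm} p(X\mid Z)\,d\Pi(Z)\big)$ and bound the denominator from below. The key observation is that the sampling model in Equation \ref{eq:sampling_model} depends on $Z$ only through the similarity matrix $ZZ^T$; hence $p(X\mid Z)=p(X\mid Z^*)$ for every $Z$ with $ZZ^T = Z^*Z^{*T}$, and restricting the denominator integral to this set gives
\[
\int_{\sobm} p(X\mid Z)\,d\Pi(Z) \ge p(X\mid Z^*)\,\Pi\big(ZZ^T = Z^*Z^{*T}\big).
\]
Substituting this bound and noting that $E_{Z^*}$ integrates against the density $p(X\mid Z^*)$, the factor $p(X\mid Z^*)$ cancels, so that after an application of Fubini's theorem to exchange the $X$- and $Z$-integrations,
\[
E_{Z^*}[(1-\phi)\,\Pi(U\mid X)] \le \frac{1}{\Pi(ZZ^T = Z^*Z^{*T})}\int_U E_Z[1-\phi]\,d\Pi(Z).
\]
Bounding the remaining integral by $\Pi(U)\sup_{Z\in U}E_Z[1-\phi]\le \sup_{Z\in U}E_Z[1-\phi]$ via $\Pi(U)\le 1$, and combining with the Type-I bound, would yield the stated inequality.

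I expect the denominator lower bound to be the crux of the argument and the only genuinely model-specific step. A generic posterior cannot have its normalizer controlled by a single object, and it is precisely the invariance of the Gaussian likelihood under $Z\mapsto ZZ^T$ that supplies a set of positive prior probability on which the likelihood is constant and equal to $p(X\mid Z^*)$. By contrast, the decomposition by $\phi$, the cancellation of $p(X\mid Z^*)$, the Fubini interchange, and the crude $\Pi(U)\le 1$ bound are all routine. The only additional care needed is to check the measurability and integrability required for Fubini, which is immediate here since the likelihoods are Gaussian densities and $\Pi$ is a discrete measure on binary matrices.
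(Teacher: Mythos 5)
Your argument is correct. Note that the paper itself offers no proof of this lemma --- it is imported verbatim as Theorem 1 of \citet{chen2016posterior} --- so there is nothing in the text to compare against; your write-up is the standard testing-based reconstruction (split by $\phi$, lower-bound the posterior normalizer on the prior event $\{ZZ^T=Z^*Z^{*T}\}$ where the likelihood equals $p(X\mid Z^*)$, cancel, and apply Tonelli), and it is exactly the argument one would expect behind the cited result. The one point worth making explicit is that the lemma as stated for an arbitrary family $\{\mu_Z\}$ is only valid under the implicit assumption that $\mu_Z$ depends on $Z$ only through $ZZ^T$; you correctly identify this as the crux, and it holds in the paper's setting since $\mu_Z = N_p(\bm 0, ZZ^T+\bm{I}_p)^{\otimes n}$.
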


\begin{lemma}[Remark 5.40.2 in \citealt{vershynin2012introduction}]\label{Lem2}
Suppose that $p$ dimensional random variables $\bx_i^{T}\stackrel{i.i.d.}{\sim} N(\bm 0, \Sigma)$ for $i=1,2,\cdots,n$ and let $\hat{\Sigma}=\sum\limits_{i=1}\limits^n \bx_i^T \bx_i/n$ denote the sample covariance matrix. Then for every $t \geq 0$,
\begin{align*}
P_{\Sigma}(\|\hat{\Sigma}-\Sigma\|\geq \max\{\delta,\delta^2\}\|\Sigma\|)\leq \exp(-Ct^2),
\end{align*}
where $\delta = C' \sqrt{\frac{p}{n}}+\frac{t}{\sqrt{n}}$ and $C$ and $C'$ are positive constants.
\end{lemma}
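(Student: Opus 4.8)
The plan is to reduce the inequality to the classical two-sided concentration of the extreme singular values of a standard Gaussian matrix, and then transfer that bound back to a general covariance $\Sigma$ by whitening. First I would whiten: since $\Sigma$ is symmetric positive semidefinite, write $\bx_i^T = \Sigma^{1/2}\bm g_i$ with $\bm g_i \stackrel{i.i.d.}{\sim} N(\bm 0, \bm I_p)$, so that $\hat\Sigma = \Sigma^{1/2}\hat\Sigma_g\Sigma^{1/2}$ where $\hat\Sigma_g = \frac{1}{n}\sum_{i=1}^n \bm g_i \bm g_i^T$. Because $\|\Sigma^{1/2}\|^2 = \|\Sigma\|$ and the spectral norm is submultiplicative, $\|\hat\Sigma - \Sigma\| = \|\Sigma^{1/2}(\hat\Sigma_g - \bm I_p)\Sigma^{1/2}\| \le \|\Sigma\|\,\|\hat\Sigma_g - \bm I_p\|$. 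Hence it suffices to prove $P(\|\hat\Sigma_g - \bm I_p\| \ge \max\{\delta,\delta^2\}) \le \exp(-Ct^2)$ in the isotropic case $\Sigma = \bm I_p$.

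Next I would pass to singular values. Let $G$ be the $n\times p$ matrix with rows $\bm g_i^T$, so $\hat\Sigma_g = \frac1n G^T G$, and let $s_{\min}, s_{\max}$ denote the smallest and largest singular values of $\frac{1}{\sqrt n}G$. The eigenvalues of $\hat\Sigma_g$ are precisely the squares of these singular values, so
\[
\|\hat\Sigma_g - \bm I_p\| = \max\{\, s_{\max}^2 - 1,\; 1 - s_{\min}^2 \,\}.
\]
The core ingredient is then the Gaussian singular-value bound: with probability at least $1 - 2\exp(-t^2/2)$,
\[
\sqrt n - \sqrt p - t \;\le\; s_{\min}(G) \;\le\; s_{\max}(G) \;\le\; \sqrt n + \sqrt p + t.
\]

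I would establish this bound in two steps. \emph{(i) Deviation.} Each of the maps $G \mapsto s_{\max}(G)$ and $G \mapsto s_{\min}(G)$ is $1$-Lipschitz with respect to the Frobenius norm, being an extremum of the bilinear form $v^T G u$ over unit vectors; Gaussian concentration for Lipschitz functions then gives $P(|s(G) - \mathbb E\, s(G)| \ge t) \le 2\exp(-t^2/2)$ for $s \in \{s_{\min}, s_{\max}\}$. \emph{(ii) Expectation.} Gordon's Gaussian min--max comparison inequality yields $\mathbb E\, s_{\max}(G) \le \sqrt n + \sqrt p$ and $\mathbb E\, s_{\min}(G) \ge \sqrt n - \sqrt p$. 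Chaining (i) and (ii) and dividing by $\sqrt n$ shows that, on the stated event, $1 - \delta \le s_{\min} \le s_{\max} \le 1 + \delta$ with $\delta = \sqrt{p/n} + t/\sqrt n$.

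Finally I would assemble the estimate: on that event $s_{\max}^2 - 1 \le (1+\delta)^2 - 1 = 2\delta + \delta^2$, while $1 - s_{\min}^2 \le 1$ always and $\le 2\delta - \delta^2$ when $\delta \le 1$, so $\|\hat\Sigma_g - \bm I_p\| \le 2\delta + \delta^2 \le 3\max\{\delta,\delta^2\}$ (the regime $p > n$, where $s_{\min} = 0$, is covered automatically since then $\delta > 1$ and $\max\{\delta,\delta^2\} = \delta^2 \ge 1$). Folding the constant $3$ and the factor $C'$ into the absolute constants, and enlarging the exponent so that $2\exp(-t^2/2)$ is dominated by $\exp(-Ct^2)$, gives the claim after multiplying through by $\|\Sigma\|$. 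I expect the expectation bound in step (ii) to be the main obstacle: controlling $\mathbb E\, s_{\min}(G)$ from below genuinely requires Gaussian comparison (a plain $\varepsilon$-net/union bound is too lossy for the lower tail), whereas the Lipschitz deviation step and the final algebra are routine.
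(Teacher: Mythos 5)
Your proposal is correct in substance, but note that the paper never proves this lemma at all: it is imported as a black box from Remark 5.40.2 of \cite{vershynin2012introduction} (it appears in the supplement explicitly as a ``preparatory lemma from the literature'') and is only ever \emph{used} in the proof of Theorem \ref{thm1}. What you have written is, in effect, a reconstruction of Vershynin's own proof specialized to exactly Gaussian rows: your whitening step $\hat\Sigma=\Sigma^{1/2}\hat\Sigma_g\Sigma^{1/2}$ with $\|\hat\Sigma-\Sigma\|\leq\|\Sigma\|\,\|\hat\Sigma_g-\bm I_p\|$ is the content of his Remark 5.40; your passage between $\|\hat\Sigma_g-\bm I_p\|\leq 3\max\{\delta,\delta^2\}$ and the two-sided singular value bound $1-\delta\leq s_{\min}\leq s_{\max}\leq 1+\delta$ is his Lemma 5.36 (approximate isometries), factor $3$ included; and the Lipschitz-concentration-plus-Gordon proof of $\sqrt n-\sqrt p-t\leq s_{\min}(G)\leq s_{\max}(G)\leq\sqrt n+\sqrt p+t$ is his Theorem 5.32 and Corollary 5.35. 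The specialization does buy something: with exactly Gaussian rows you can invoke Borell--TIS and Gordon directly, with sharp constants $\sqrt n\pm\sqrt p$, and you never need the $\varepsilon$-net-plus-Bernstein machinery of Vershynin's sub-Gaussian Theorem 5.39. (One small overstatement: Gordon is not ``genuinely required'' for the lower tail here, since a net argument gives $\mathbb{E}\,s_{\min}(G)\geq\sqrt n-C\sqrt p$ with a non-sharp constant, which suffices because $C'$ in the lemma is arbitrary.)

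One step needs repair. The claim that $2\exp(-t^2/2)$ can be ``dominated by $\exp(-Ct^2)$'' by enlarging constants is false near $t=0$: the left side exceeds $1$ there while the right side does not, and no choice of $C$ fixes this pointwise. The routine patch is to use that the exceedance event shrinks as $t$ grows: for $t$ below a fixed threshold, bound its probability by the probability at $t=0$, which is bounded away from $1$ once $C'$ is taken large enough (apply your two-sided bound with $t=\sqrt p$); for $t$ above the threshold the factor $2$ is absorbed into the exponent. Alternatively, simply state the lemma with the prefactor $2$, as Vershynin does --- in the paper's proof of Theorem \ref{thm1} the lemma is only invoked with $t^2$ of order $n$, $p$, or $s_nK_n^*\log(p_n+1)$, all tending to infinity, so the prefactor is immaterial there.
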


\begin{proof}[Proof of Theorem \ref{thm1}]
Let $\Sigma_n=Z_nZ_n^T+\bm{I}_{p_n}$ and $\Sigma_{n}^* =Z_{n}^{*}Z_{n}^{*T}+\bm{I}_{p_n}$ denote the model covariance matrix and the true covariance matrix, respectively.
For any $\tilde{Z}_{n}^* \in \tsobm_{p_n \times \tilde{K}_{n}^*}$ and $Z_{n}^{*} = G(\tilde{Z}_{n}^*) \in \sobm_{p_n}^0$, we have
\begin{enumerate}
\item $\{Z_n:\|Z_nZ_n^T-\tilde{Z}_{n}^* \tilde{Z}_{n}^{*T} \|\leq C\epsilon_n\} = \{Z_n:\|Z_nZ_n^T-Z_{n}^{*}Z_{n}^{*T}\|\leq C\epsilon_n\}$;
\item $\text{E}_{\tilde{Z}_{n}^*}(\cdot) = \text{E}_{Z_{n}^{*}}(\cdot)$, considering the binary factor model where 
\begin{align*}
\bx_i^{T} \mid Z_{n}^{*}\stackrel{i.i.d}{\sim} N(\bm 0, Z_{n}^{*}Z_{n}^{*T}+\bm{I}_{p_n}).
\end{align*}
\end{enumerate}
Based on the above two facts, it suffices to prove for $Z_{n}^* \in \sobm_{p_n}^0$.
In the following proof, $C$ represents the positive constant in Theorem \ref{thm1}, and $C'$, $C_1, C_2, \ldots, C_{11}$ represent other positive constants.

We first consider the case that $Z_n^* \in \sobm_{p_n \times K_{n}^{*}}$. That is, $m_{kn}\geq 1$ for all $k = 1, \ldots, K_{n}^{*}$ and $n = 1, 2, \ldots$. 
Recall Lemma \ref{lem1},
\begin{align}
\Pi(Z_n = Z_{n}^{*} )\geq \exp(-C_1 s_n K_{n}^{*} \log (p_n+1)).
\label{cond0_pf_thm}
\end{align}
Also, recall the assumption 
$$
\epsilon_n=\frac{\max\{\sqrt{p_n},\sqrt{s_nK_{n}^{*}\log (p_n+1)}\}}{\sqrt{n}}\|Z_{n}^{*}Z_{n}^{*T}\|  \rightarrow 0,
$$ 
as $n\rightarrow \infty$. Since $\|\Sigma_{n}^* \|=\|Z_{n}^{*} Z_{n}^{*T}+\bm{I}_{p_n}\|=\|Z_{n}^{*} Z_{n}^{*T}\|+1 \leq C_2 \|Z_{n}^{*} Z_{n}^{*T}\|$, the following three conditions hold:
\begin{align}
&\epsilon_n' \triangleq \frac{\max\{\sqrt{p_n},\sqrt{s_n K_{n}^{*} \log(p_n + 1)}\}}{\sqrt{n}}\|\Sigma_{n}^* \| \rightarrow 0, \quad \text{as $n\rightarrow \infty$}; \label{cond1_pf_thm}\\
&\frac{p_n}{n} \rightarrow 0, \quad \text{as $n\rightarrow \infty$}; \label{cond2_pf_thm} \\
&\frac{s_n K_{n}^{*} \log(p_n + 1)}{n} \rightarrow 0, \quad \text{as $n\rightarrow \infty$}. \label{cond3_pf_thm}
\end{align} 
Condition \eqref{cond1_pf_thm} is due to $\epsilon_n' \leq C_2 \epsilon_n \rightarrow 0$ as $n\rightarrow \infty$, condition \eqref{cond2_pf_thm} is due to $\sqrt{p_n / n} \leq \sqrt{p_n / n} \left( \|Z^*Z^{*T}\|+1 \right) \leq \epsilon_n' \rightarrow 0$ as $n\rightarrow \infty$, and condition \eqref{cond3_pf_thm} is due to $\sqrt{s_n K_{n}^{*} \log(p_n + 1) / n} \leq \sqrt{s_n K_{n}^{*} \log(p_n + 1) / n} \left( \|Z^*Z^{*T}\|+1 \right) \leq \epsilon_n' \rightarrow 0$ as $n\rightarrow \infty$.

Since $C_2 \epsilon_n \geq \epsilon_n'$, we can re-write $C \epsilon_n \geq C' \epsilon_n'$ for some positive constant $C' \leq C / C_2$. To show
\begin{align*}
E_{Z_{n}^{*}}[\Pi(\|Z_nZ_n^T-Z_{n}^{*}Z_{n}^{*T}\| \leq C \epsilon_n \mid X)]\rightarrow 1, \quad \text{as $n \rightarrow \infty$},
\end{align*}
it suffices to show
\begin{align*}
E_{Z_{n}^{*}}[\Pi(\|Z_nZ_n^T-Z_{n}^{*}Z_{n}^{*T}\| \geq C \epsilon_n \mid X)]\rightarrow 0, \quad \text{as $n \rightarrow \infty$},
\end{align*}
or
\begin{align*}
E_{Z_{n}^{*}}[\Pi(\|Z_nZ_n^T-Z_{n}^{*}Z_{n}^{*T}\| \geq C' \epsilon_n' \mid X)]\rightarrow 0, \quad \text{as $n \rightarrow \infty$}.
\end{align*}

For notational simplicity, we omit the index $n$ in $K_{n}^{*}$, $p_n$, $s_n$, $m_{kn}$, $Z_n$, $Z_{n}^{*}$, $\Sigma_n$, $\Sigma_{n}^*$, etc., in the following proof.
Let
\begin{align}
U\triangleq\{\|ZZ^T-Z^*Z^{*T}\|\geq C' \epsilon'\}=\{\|\Sigma-\Sigma^*\|\geq C' \epsilon'\}.
\label{u}
\end{align}
Thus,
\begin{align*}
E_{Z^*}[\Pi(U \mid X)] = E_{Z^*}[\Pi(\|ZZ^T-Z^*Z^{*T}\|\geq C' \epsilon' \mid X)].
\end{align*}

Lemma \ref{Lem1} implies that to show $E_{Z^*}[\Pi(U \mid X)] \rightarrow 0$ as $n \rightarrow \infty$, it suffices to show  
$\frac{1}{\Pi(ZZ^{T}=Z^*Z^{*T})}\sup\limits_{Z\in U} E_Z(1-\phi) \rightarrow 0$
and
$E_{Z^*}\phi \rightarrow 0$ as $n \rightarrow \infty$,
for some testing function $\phi$.

Define the testing function 
\begin{align*}
\phi=\mathds{1}_{\{\|\hat{\Sigma}-\Sigma^*\|> \frac{C' \epsilon'}{2}\}}.
\end{align*}
We first show $\lim\limits_{n \rightarrow \infty} \frac{1}{\Pi(ZZ^{T}=Z^*Z^{*T})}\sup\limits_{Z\in U} E_Z(1-\phi) = 0$, where $E_Z(1-\phi)=P_\Sigma(\|\hat{\Sigma}-\Sigma^*\|\leq \frac{C' \epsilon'}{2})$. 
We consider two scenarios
$\|\Sigma\|\geq 2\|\Sigma^*\|$ and $\|\Sigma\|< 2\|\Sigma^*\|$ separately. 

For  $\|\Sigma\|\geq 2\|\Sigma^*\|$,
\begin{align*}
P_\Sigma(\|\hat{\Sigma}-\Sigma^*\|\leq \frac{C' \epsilon'}{2})&\leq P_\Sigma(\|\hat{\Sigma}-\Sigma\|\geq \|\Sigma\|-\|\Sigma^*\|-\frac{C' \epsilon'}{2})\\
&=P_\Sigma\left( 
\frac{\|\hat{\Sigma}-\Sigma\|}{\|\Sigma\|}\geq 1-\frac{\|\Sigma^*\|}{\|\Sigma\|}-\frac{C' \epsilon'}{2\|\Sigma\|}\right)\\
&\leq P_{\Sigma}\left( \frac{\|\hat{\Sigma}-\Sigma\|}{\|\Sigma\|}\geq \frac{1}{2}-\frac{C' \epsilon'}{2\|\Sigma\|}\right)\\
&\leq  P_{\Sigma}\left( \frac{\|\hat{\Sigma}-\Sigma\|}{\|\Sigma\|}\geq \frac{1}{2}-\frac{C' \epsilon'}{4\|\Sigma^*\|}\right)\\
&\leq P_{\Sigma}\left( \frac{\|\hat{\Sigma}-\Sigma\|}{\|\Sigma\|}\geq \frac{1}{2}-\frac{C' \epsilon'}{4}\right).
\end{align*}
Since $\epsilon' \rightarrow 0$ as $n\rightarrow \infty$, for sufficiently large $n$, we have
\begin{align*}
P_{\Sigma}\left( \frac{\|\hat{\Sigma}-\Sigma\|}{\|\Sigma\|}\geq \frac{1}{2}-\frac{C' \epsilon'}{4}\right)\leq  P_{\Sigma}\left( \frac{\|\hat{\Sigma}-\Sigma\|}{\|\Sigma\|}\geq \frac{1}{4}\right).
\end{align*}
Applying Lemma \ref{Lem2} with $t= \sqrt{n} / 5$. Since $p/n\rightarrow 0$ as $n\rightarrow \infty$, $\delta = C_3 \sqrt{\frac{p}{n}}+\frac{t}{\sqrt{n}}\leq \frac{1}{4}$ for sufficiently large $n$. Thus
\begin{align*}
P_{\Sigma}\left( \frac{\|\hat{\Sigma}-\Sigma\|}{\|\Sigma\|}\geq \frac{1}{4}\right) \leq P_{\Sigma}\left( \frac{\|\hat{\Sigma}-\Sigma\|}{\|\Sigma\|}\geq \max\{\delta,\delta^2\}\right)\leq \exp(-C_4 n).
\end{align*}
Applying Lemma \ref{lem1} \eqref{cond0_pf_thm} and since $[s K^* \log (p+1)] / n \rightarrow 0$ as $n\rightarrow\infty$,
\begin{align*}
\frac{1}{\Pi(ZZ^{T}=Z^*Z^{*T})}\sup\limits_{\|\Sigma\|\geq 2\|\Sigma^*\|}P_\Sigma(\|\hat{\Sigma}-\Sigma^*\|\leq \frac{C' \epsilon'}{2})\leq \frac{\exp(-C_4 n)}{\exp(-C_1 sK^*\log (p+1))}\rightarrow 0,
\end{align*}
as $n\rightarrow \infty$.

Next, we consider $\|\Sigma\|\leq 2\|\Sigma^*\|$. Applying Lemma \ref{Lem2} with $t = C_5 \sqrt{sK^*\log (p+1)}$,
\begin{align*}
P_{\Sigma}(\|\hat{\Sigma}-\Sigma\|\geq 2\max\{\delta,\delta^2\}\|\Sigma^*\|)
&\leq P_{\Sigma}(\|\hat{\Sigma}-\Sigma\|\geq \max\{\delta,\delta^2\}\|\Sigma\|) \\
&\leq \exp(-C_6 C_5^2 sK^*\log (p+1)),
\end{align*}
where $\delta = C_7 \sqrt{\frac{p}{n}}+ C_5 \sqrt{\frac{sK^*\log (p+1)}{n}}$. 
Again, according to Lemma \ref{lem1} \eqref{cond0_pf_thm}, 
\begin{multline*}
\frac{1}{\Pi(ZZ^{T}=Z^*Z^{*T})}\sup\limits_{\|\Sigma\|\leq 2\|\Sigma^*\|}P_{\Sigma}(\|\hat{\Sigma}-\Sigma\|\geq 2\max\{\delta,\delta^2\}\|\Sigma^*\|) \leq \\
\frac{\exp(-C_6 C_5^2 sK^*\log (p+1))}{\exp(-C_1 sK^*\log (p+1))} \rightarrow 0,
\end{multline*}
for a sufficiently large $C_5 > 0$.
Due to conditions \eqref{cond2_pf_thm} and \eqref{cond3_pf_thm},
$\delta\rightarrow 0$ as $n \rightarrow \infty$. Thus, for sufficiently large $n$,
\begin{align*}
\max\{\delta,\delta^2\}=\delta \leq \frac{C_8}{2} \cdot \frac{\max\{\sqrt{p},\sqrt{sK^*\log (p+1)}\}}{\sqrt{n}},
\end{align*}
for some positive constant $C_8$, which implies
\begin{multline*}
\frac{1}{\Pi(ZZ^{T}=Z^*Z^{*T})}\sup\limits_{\|\Sigma\|\leq 2\|\Sigma^*\|}P_{\Sigma}\Big( \|\hat{\Sigma}-\Sigma\|\geq C_8 \frac{\max\{\sqrt{p},\sqrt{sK^*\log (p+1)}\}}{\sqrt{n}}\|\Sigma^*\|\Big) = \\
\frac{1}{\Pi(ZZ^{T}=Z^*Z^{*T})}\sup\limits_{\|\Sigma\|\leq 2\|\Sigma^*\|}P_{\Sigma}\Big( \|\hat{\Sigma}-\Sigma\|\geq C_8 \epsilon' \Big)
\rightarrow 0,
\end{multline*}
as $n\rightarrow \infty$. Recall that $\phi$ is the indicator function $\mathds{1}_{\{\|\hat{\Sigma}-\Sigma^*\|> \frac{C' \epsilon'}{2}\}}$ and $U=\{\|\Sigma-\Sigma^*\|\geq C' \epsilon'\}$ as in (\ref{u}). By choosing $C' \geq 2 C_8$, we have
\begin{align*}
{}&\frac{1}{\Pi(ZZ^{T}=Z^*Z^{*T})}\sup\limits_{Z\in U,\ \|\Sigma\|\leq 2\|\Sigma^*\|}E_Z (1-\phi) \\
= {}&\frac{1}{\Pi(ZZ^{T}=Z^*Z^{*T})}\sup\limits_{Z\in U,\ \|\Sigma\|\leq 2\|\Sigma^*\|}P_{\Sigma}\left( \|\hat{\Sigma}-\Sigma^*\|\leq \frac{C' \epsilon'}{2}\right) \\
\leq {}&\frac{1}{\Pi(ZZ^{T}=Z^*Z^{*T})}\sup\limits_{\|\Sigma\|\leq 2\|\Sigma^*\|}P_{\Sigma}\left( \|\hat{\Sigma}-\Sigma\|\geq \frac{C' \epsilon'}{2}\right) \\
\leq {}&\frac{1}{\Pi(ZZ^{T}=Z^*Z^{*T})}\sup\limits_{\|\Sigma\|\leq 2\|\Sigma^*\|}P_{\Sigma}\Big( \|\hat{\Sigma}-\Sigma\|\geq C_8 \epsilon' \Big)
\rightarrow 0,
\end{align*}
as $n\rightarrow \infty$.

Combining the results for $\|\Sigma\|\geq 2\|\Sigma^*\|$ and $\|\Sigma\|\leq 2\|\Sigma^*\|$, we have proved
\begin{align}
\frac{1}{\Pi(ZZ^{T}=Z^*Z^{*T})}\sup\limits_{Z\in U}E_Z
(1-\phi)\rightarrow 0, \label{Lim1}
\end{align}
as $n\rightarrow \infty$, for $C' \geq 2C_8$.

Next, it remains to show $\lim\limits_{n \rightarrow \infty} E_{Z^*} \phi = 0$. 

Applying Lemma \ref{Lem2} with $t = \sqrt{p}$ and $\delta = C_9 \sqrt{\frac{p}{n}} + \frac{t}{\sqrt{n}}= C_{10} \sqrt{\frac{p}{n}}$. 
Due to condition \eqref{cond2_pf_thm}, $\delta\rightarrow 0$ as $n\rightarrow \infty$.
Thus, $\max\{\delta,\delta^2\}=\delta=C_{10}\sqrt{\frac{p}{n}}$ for sufficiently large $n$ and
\begin{align*}
P_{\Sigma^*}\left( \|  \hat{\Sigma}  -\Sigma^*\|\geq C_{10} \sqrt{\frac{p}{n}}\|\Sigma^*\|\right)=P_{\Sigma^*}(\|\hat{\Sigma} -\Sigma^*\|\geq \max\{\delta,\delta^2\}\|\Sigma^*\|)\leq\exp(-C_{11} p) \rightarrow 0,
\end{align*}
as $n\rightarrow \infty$.

Since $\epsilon'=\frac{\max\{\sqrt{p},\sqrt{sK^*\log(p+1)}\}}{\sqrt{n}}\|\Sigma^*\|\geq \sqrt{\frac{p}{n}}\|\Sigma^*\|$, we have
\begin{align}
E_{Z^*}\phi=P_{\Sigma^*}\left( \| \hat{\Sigma} -\Sigma^*\|> \frac{C' \epsilon'}{2}\right) \leq 
P_{\Sigma^*}\left( \| \hat{\Sigma} -\Sigma^*\| \geq
 C_{10} \sqrt{\frac{p}{n}}\|\Sigma^*\|\right) \rightarrow 0\label{Lim2}
\end{align}
as $n \rightarrow \infty$, for $C' \geq 2C_{10}$.

Combining results in (\ref{Lim1}) and (\ref{Lim2}), by Lemma \ref{Lem1}, we have 
\begin{align*}
E_{Z^{*}}[\Pi(\|ZZ^T-Z^{*}Z^{*T}\| \geq C' \epsilon' \mid X)]\rightarrow 0, \quad \text{as $n \rightarrow \infty$},
\end{align*}
for $C' \geq 2\max\{C_8, C_{10} \}$. By choosing $C \geq C' C_2$ we have

\begin{align*}
E_{Z^{*}}[\Pi(\|ZZ^T-Z^{*}Z^{*T}\| \geq C \epsilon \mid X)]\rightarrow 0, \quad \text{as $n \rightarrow \infty$}.
\end{align*}
The proof for $Z^* \in \sobm_{p \times K^{*}}$ is now complete.

Next, we consider $Z^{*} = \bm 0$, in which situation  $\epsilon=\frac{\max\{\sqrt{p},\sqrt{\log (p+1)}\}}{\sqrt{n}}$.

Lemma \ref{lem1} ensures that
\begin{align*}
\Pi(Z=Z^{*})\geq \exp(-C\log(p+1)).
\end{align*}
Similar to the proof for $Z^* \in \sobm_{p \times K^{*}}$, if
$\epsilon\rightarrow 0$ as $n\rightarrow \infty$, then 
\begin{align*}
E_{Z^*}\left[ \Pi\left( \| Z Z^T- Z^{*} Z^{*T} \|\leq C \epsilon \mid X\right) \right] \rightarrow 1,
\end{align*}
as $n\rightarrow \infty$ for some positive constant $C$.
The proof is now complete. 
\end{proof}

\subsection{Proof of Corollary \ref{cor1}}
\begin{proof}[Proof of Corollary \ref{cor1}]
For notational simplicity, we omit the index $n$.
Let $Z^* = (\bm{z}_1, \bm{z}_2, \cdots, \bm{z}_p)^{T}$, then $Z^* Z^{*T} = (\bz_a^{T} \bz_b)_{p\times p}$. We have
\begin{align*}
\|Z^*Z^{*T}\|\leq \sup\limits_{1\leq a\leq p}\Big(\sum\limits_{b=1}\limits^{p} \bz_a^{T} \bz_b\Big).
\end{align*}

For each $a$, $\sum\limits_{b=1}\limits^{p} \bz_a^{T} \bz_b$ represents the total number of features shared by object $a$ and all the other objects. If $\bz_a$ has at most $q$ non-zero entries and each column of $Z^*$ has at most $s$ non-zero entries, then
\begin{align*}
\sum\limits_{b=1}\limits^{p} \bz_a^{T} \bz_b\leq sq,
\end{align*}
for every $a$, which implies that
\begin{align*}
\|Z^*Z^{*T}\|\leq sq,
\end{align*}
Since $s,q\geq 1$, we immediately have
\begin{align*}
\epsilon=\frac{\max\{\sqrt{p},\sqrt{sK^*\log(p+1)}\}}{\sqrt{n}}\max\{1,\|Z^*Z^{*T}\|\}\leq \tilde{\epsilon} =\frac{\max\{\sqrt{p},\sqrt{sK^*\log(p+1)}\}}{\sqrt{n}} sq.
\end{align*}
Thus, $\tilde{\epsilon}\rightarrow 0$ implies $\epsilon\rightarrow 0$ (as $n \rightarrow \infty$). According to Theorem \ref{thm1}, we have 
$E_{Z^{*}}[\Pi(\|ZZ^T-Z^{*}Z^{*T}\|\leq C\epsilon \mid X)]\rightarrow 1$. Therefore 
\begin{align*}
E_{Z^{*}}[\Pi(\|ZZ^T-Z^{*}Z^{*T}\| \geq C \tilde{\epsilon} \mid X)] \leq E_{Z^{*}}[\Pi(\|ZZ^T-Z^{*}Z^{*T}\| \geq C \epsilon \mid X)] \rightarrow 0,
\end{align*}
as $n \rightarrow \infty$, i.e. $E_{Z^{*}}[\Pi(\|ZZ^T-Z^{*}Z^{*T}\| \leq C \tilde{\epsilon} \mid X)] \rightarrow 1$,
meaning that $\tilde{\epsilon}$ is the posterior contraction rate given $\tilde{\epsilon}\rightarrow 0$ as $n\rightarrow 0$. Specifically,
\begin{enumerate}
	\item if there is no contraint on the total number of active features in each row, then we set $q=K^*$, meaning that $\tilde{\epsilon}=\frac{\max\{\sqrt{p},\sqrt{sK^*\log(p+1)}\}}{\sqrt{n}} sK^*$ is a valid posterior contraction rate given $\tilde{\epsilon} \rightarrow 0$ as $n\rightarrow \infty$;
	\item if $q$ is bounded or fixed, then $\tilde{\epsilon}=\frac{\max\{\sqrt{p},\sqrt{sK^*\log(p+1)}\}}{\sqrt{n}} s$ is a valid posterior contraction rate given $\tilde{\epsilon} \rightarrow 0$ as $n\rightarrow \infty$.	
\end{enumerate}
\end{proof}
\clearpage

\subsection{Supplementary Tables}

\begin{center}
\begin{longtable}{ | c | c | c | c | c | c | c |  }
\caption{Genes (gene symbol:entrez id) associated with the top 10 proteins with the largest loadings for the 5 most popular features. A $\checkmark$ means the gene is included in the feature.} 
\label{table:app:gene-feature} \\
\hline
Gene      &  1  &  2  &  3  &  4  &  5  & Count \\ \hline
TSC2:7249 & \checkmark  &   & & \checkmark & & 2\\ \hline
DIRAS3:9077 & & \checkmark & \checkmark & & & 2 \\ \hline
BAX:581 & & & \checkmark & \checkmark & & 2 \\ \hline
GSK3A:2931 & & & \checkmark &  & \checkmark & 2 \\ \hline
AKT1:207 & & &  & \checkmark & \checkmark & 2 \\ \hline
ABL1:25  & \checkmark  &   & &  & & 1 \\ \hline
ARAF:369 & \checkmark  &   & &  & & 1 \\ \hline
BID:637  & \checkmark  &   & &  & & 1 \\ \hline
CASP7:840     & \checkmark  &   & &  & & 1 \\ \hline
DIABLO:56616  & \checkmark  &   & &  & & 1 \\ \hline
EEF2K:29904   & \checkmark  &   & &  & & 1 \\ \hline
EIF4EBP1:1978 & \checkmark  &   & &  & & 1 \\ \hline
RB1:5925      & \checkmark  &   & &  & & 1 \\ \hline
RPS6KB1:6198  & \checkmark  &   & &  & & 1 \\ \hline
JAK2:3717     &   & \checkmark  & &  & & 1 \\ \hline
MYH9:4627     &   & \checkmark  & &  & & 1 \\ \hline
NRAS:4893     &   & \checkmark  & &  & & 1 \\ \hline
PREX1:57580   &   & \checkmark  & &  & & 1 \\ \hline
PTEN:5728     &   & \checkmark  & &  & & 1 \\ \hline
SRSF1:6426    &   & \checkmark  & &  & & 1 \\ \hline
STAT3:6774    &   & \checkmark  & &  & & 1 \\ \hline
STAT5A:6776   &   & \checkmark  & &  & & 1 \\ \hline
STMN1:3925    &   & \checkmark  & &  & & 1 \\ \hline
ANXA7:310     &   &  & \checkmark &  & & 1 \\ \hline
COPS5:10987   &   &  & \checkmark &  & & 1 \\ \hline
ERCC5:2073    &   &  & \checkmark &  & & 1 \\ \hline
IRF1:3659     &   &  & \checkmark &  & & 1 \\ \hline
ITGA2:3673    &   &  & \checkmark &  & & 1 \\ \hline
TFRC:7037     &   &  & \checkmark &  & & 1 \\ \hline
YWHAB:7529    &   &  & \checkmark &  & & 1 \\ \hline
BRD4:23476    &   &  &  & \checkmark & & 1 \\ \hline
EIF4G1:1981   &   &  &  & \checkmark & & 1 \\ \hline
MAPK9:5601    &   &  &  & \checkmark & & 1 \\ \hline
MRE11:4361    &   &  &  & \checkmark & & 1 \\ \hline
NFKB1:4790    &   &  &  & \checkmark & & 1 \\ \hline
PRKCA:5578    &   &  &  & \checkmark & & 1 \\ \hline
RBM15:64783   &   &  &  & \checkmark & & 1 \\ \hline
BCL2:596      &   &  &  &  &\checkmark & 1 \\ \hline
BCL2L11:10018 &   &  &  &  &\checkmark & 1 \\ \hline
ERRFI1:54206  &   &  &  &  &\checkmark & 1 \\ \hline
MAP2K1:5604   &   &  &  &  &\checkmark & 1 \\ \hline
MAPK1:5594    &   &  &  &  &\checkmark & 1 \\ \hline
YBX1:4904     &   &  &  &  &\checkmark & 1 \\ \hline
\end{longtable}
\end{center}

\begin{landscape}
\begin{longtable}{ | c | >{\centering\arraybackslash}p{2.5cm} | >{\centering\arraybackslash}p{3cm} | >{\centering\arraybackslash}p{3cm} | >{\centering\arraybackslash}p{3cm} | >{\centering\arraybackslash}p{3cm} | >{\centering\arraybackslash}p{3cm} | c | }
\caption{Pathways that the top genes of the first 5 features are enriched in. Each cell shows the genes (gene symbol:entrez id) of each feature that are enriched in the corresponding pathway.} 
\label{table:app:gene-pathway} \\
\hline
Group   & Pathways & 1  &  2  &  3  &  4  &  5  & Count \\ \hline
Cancer & Pathways in cancer & 
ABL1:25, ARAF:369, RB1:5925, BID:637 & 
NRAS:4893, PTEN:5728, STAT3:6774, STAT5A:6776 & 
ITGA2:3673, BAX:581 & 
AKT1:207, NFKB1:4790, PRKCA:5578, MAPK9:5601, BAX:581 & 
AKT1:207, MAPK1:5594, MAP2K1:5604, BCL2:596 & 
5  \\ \hline
PI3K & PI3K-Akt signaling pathway & 
EIF4EBP1:1978, RPS6KB1:6198, TSC2:7249 & 
JAK2:3717, NRAS:4893, PTEN:5728 & 
ITGA2:3673, YWHAB:7529 &
AKT1:207, NFKB1:4790, PRKCA:5578, TSC2:7249 &
BCL2L11:10018, AKT1:207, MAPK1:5594, MAP2K1:5604, BCL2:596 &
5  \\ \hline
Cancer & Prostate cancer & 
ARAF:369, RB1:5925 & 
NRAS:4893, PTEN:5728 & 
-- & 
AKT1:207, NFKB1:4790 & 
AKT1:207, MAPK1:5594, MAP2K1:5604, BCL2:596 &
4  \\ \hline
Cancer & Chronic myeloid leukemia & 
ABL1:25, ARAF:369, RB1:5925 & 
NRAS:4893, STAT5A:6776 & 
-- & 
AKT1:207, NFKB1:4790 & 
AKT1:207, MAPK1:5594, MAP2K1:5604 & 
4  \\ \hline
Cancer & Glioma & 
ARAF:369, RB1:5925 & 
NRAS:4893, PTEN:5728 & 
-- & 
AKT1:207, PRKCA:5578 & 
AKT1:207, MAPK1:5594, MAP2K1:5604 & 
4  \\ \hline
Cancer & Non-small cell lung cancer & 
ARAF:369, RB1:5925 & 
NRAS:4893, STAT3:6774, STAT5A:6776 & 
-- & 
AKT1:207, PRKCA:5578 & 
AKT1:207, MAPK1:5594, MAP2K1:5604 & 
4  \\ \hline
Cancer & 
Acute myeloid leukemia & 
EIF4EBP1:1978, ARAF:369, RPS6KB1:6198 & 
NRAS:4893, STAT3:6774, STAT5A:6776 & 
-- &
AKT1:207, NFKB1:4790 & 
AKT1:207, MAPK1:5594, MAP2K1:5604 & 
4  \\ \hline
PI3K	 & 
mTOR signaling pathway & 
EIF4EBP1:1978, RPS6KB1:6198, TSC2:7249 & 
NRAS:4893, PTEN:5728 & 
-- & 
AKT1:207, PRKCA:5578, TSC2:7249 & 
AKT1:207, MAPK1:5594, MAP2K1:5604 & 
4  \\ \hline
PI3K	 & 
ErbB signaling pathway	 & 
EIF4EBP1:1978, ABL1:25, ARAF:369, RPS6KB1:6198 & 
NRAS:4893, STAT5A:6776 & 
-- & 
AKT1:207, PRKCA:5578, MAPK9:5601	 & 
AKT1:207, MAPK1:5594, MAP2K1:5604 & 
4  \\ \hline
-- & Non-alcoholic fatty liver disease (NAFLD) & 
BID:637, CASP7:840 & 
-- & 
GSK3A:2931, BAX:581 & 
AKT1:207, NFKB1:4790, MAPK9:5601, BAX:581 & 
BCL2L11:10018, AKT1:207, GSK3A:2931 & 
4  \\ \hline
-- & Regulation of autophagy & 
RPS6KB1:6198, TSC2:7249 & 
NRAS:4893, PTEN:5728 & 
-- & 
AKT1:207, MAPK9:5601, TSC2:7249 & 
AKT1:207, MAPK1:5594, MAP2K1:5604, BCL2:596 &
4   \\ \hline
-- & 
Proteoglycans in cancer & 
ARAF:369, RPS6KB1:6198 & 
NRAS:4893, STAT3:6774 & 
-- & 
AKT1:207, PRKCA:5578 & 
AKT1:207, MAPK1:5594, MAP2K1:5604 &
4   \\ \hline
-- & Hepatitis B & 
-- & 
NRAS:4893, PTEN:5728, STAT3:6774, STAT5A:6776 & 
BAX:581, YWHAB:7529 & 
AKT1:207, NFKB1:4790, PRKCA:5578, MAPK9:5601, BAX:581 & 
AKT1:207, MAPK1:5594, MAP2K1:5604, BCL2:596 & 
4   \\ \hline
Cancer & Melanoma & 
ARAF:369, RB1:5925 & 
NRAS:4893, PTEN:5728 & 
-- & 
-- & 
AKT1:207, MAPK1:5594, MAP2K1:5604 & 
3   \\ \hline
Cancer & Pancreatic cancer & 
ARAF:369, RB1:5925 & 
-- & 
-- & 
AKT1:207, NFKB1:4790, MAPK9:5601 & 
AKT1:207, MAPK1:5594, MAP2K1:5604 & 
3   \\ \hline
PI3K & 	Insulin signaling pathway & 
EIF4EBP1:1978, ARAF:369, RPS6KB1:6198, TSC2:7249 & 
-- & 
-- & 
AKT1:207, MAPK9:5601, TSC2:7249 & 
AKT1:207, MAPK1:5594, MAP2K1:5604 & 
3   \\ \hline
PI3K	 & Apoptosis & 
DIABLO:56616, BID:637, CASP7:840 & 
-- & 
-- & 
AKT1:207, NFKB1:4790, MAPK9:5601, BAX:581 & 
BCL2L11:10018, AKT1:207, MAPK1:5594, MAP2K1:5604, BCL2:596 &
3   \\ \hline
PI3K		 & Chemokine signaling pathway	 & 
--	 & 
JAK2:3717, NRAS:4893, PREX1:57580, STAT3:6774	 & 
--	 & 
AKT1:207, NFKB1:4790	 & 
AKT1:207, GSK3A:2931, MAPK1:5594, MAP2K1:5604	 & 
3   \\ \hline
PI3K &
FoxO signaling pathway &
-- &
NRAS:4893, PTEN:5728, STAT3:6774 &
-- &
AKT1:207, MAPK9:5601 &
BCL2L11:10018, AKT1:207, MAPK1:5594, MAP2K1:5604 &
3   \\ \hline
PI3K & 
MAPK signaling pathway & 
-- & 
STMN1:3925, NRAS:4893 & 
-- & 
AKT1:207, NFKB1:4790, PRKCA:5578, MAPK9:5601 & 
AKT1:207, MAPK1:5594, MAP2K1:5604 & 
3   \\ \hline
-- & HIF-1 signaling pathway & 
EIF4EBP1:1978, RPS6KB1:6198 & 
-- & 
-- & 
AKT1:207, NFKB1:4790, PRKCA:5578 & 
AKT1:207, MAPK1:5594, MAP2K1:5604, BCL2:596 & 
3   \\ \hline
-- & Choline metabolism in cancer & 
EIF4EBP1:1978, RPS6KB1:6198, TSC2:7249 & 
-- & 
-- & 
AKT1:207, PRKCA:5578, MAPK9:5601, TSC2:7249 & 
AKT1:207, MAPK1:5594, MAP2K1:5604 & 
3   \\ \hline
-- & Viral carcinogenesis & 
-- & 
NRAS:4893, STAT3:6774, STAT5A:6776 & 
BAX:581, YWHAB:7529 & 
NFKB1:4790, BAX:581 & 
-- & 
3   \\ \hline
-- & Cholinergic synapse & 
-- & 
JAK2:3717, NRAS:4893 & 
-- & 
AKT1:207, PRKCA:5578 & 
AKT1:207, MAPK1:5594, MAP2K1:5604, BCL2:596 & 
3   \\ \hline
-- & 
Prolactin signaling pathway & 
-- & 
JAK2:3717, NRAS:4893, STAT3:6774, STAT5A:6776 & 
-- & 
AKT1:207, NFKB1:4790, MAPK9:5601 & 
AKT1:207, MAPK1:5594, MAP2K1:5604 & 
3   \\ \hline
-- & Toxoplas-mosis & 
-- & 
JAK2:3717, STAT3:6774 & 
-- & 
AKT1:207, NFKB1:4790, MAPK9:5601 & 
AKT1:207, MAPK1:5594, BCL2:596 & 
3   \\ \hline
-- & Sphingolipid signaling pathway & 
-- & 
NRAS:4893, PTEN:5728 & 
-- & 
AKT1:207, NFKB1:4790, PRKCA:5578, MAPK9:5601, BAX:581 & 
AKT1:207, MAPK1:5594, MAP2K1:5604, BCL2:596 & 
3   \\ \hline
-- & MicroRNAs in cancer & 
-- & 
STMN1:3925, NRAS:4893, PTEN:5728, STAT3:6774 & 
-- & 
NFKB1:4790, PRKCA:5578 & 
BCL2L11:10018, MAPK1:5594, MAP2K1:5604, BCL2:596 & 
3   \\ \hline
-- & Hepatitis C & 
-- & 
NRAS:4893, STAT3:6774 & 
-- & 
AKT1:207, NFKB1:4790, MAPK9:5601 & 
AKT1:207, MAPK1:5594 & 
3   \\ \hline
Cancer & 
Bladder cancer & 
ARAF:369, RB1:5925 & 
-- & 
-- & 
-- & 
MAPK1:5594, MAP2K1:5604 & 
2   \\ \hline
Cancer & 
Endometrial cancer & 
-- & 
NRAS:4893, PTEN:5728 & 
-- & 
-- & 
AKT1:207, MAPK1:5594, MAP2K1:5604 & 
2   \\ \hline
Cancer & 
Colorectal cancer & 
-- & 
-- & 
-- & 
AKT1:207, MAPK9:5601, BAX:581 & 
AKT1:207, MAPK1:5594, MAP2K1:5604, BCL2:596 & 
2   \\ \hline
Cancer & 
Small cell lung cancer & 
-- & 
-- & 
-- & 
AKT1:207, NFKB1:4790 & 
AKT1:207, BCL2:596 & 
2   \\ \hline
PI3K	 & 
p53 signaling pathway & 
BID:637, TSC2:7249 & 
-- & 
-- & 
BAX:581, TSC2:7249 & 
-- & 
2   \\ \hline
PI3K & 
Jak-STAT signaling pathway & 
-- & 
JAK2:3717, STAT3:6774, STAT5A:6776 & 
-- & 
-- & 
AKT1:207, BCL2:596 & 
2   \\ \hline
PI3K & 
Focal adhesion & 
-- & 
-- & 
-- & 
AKT1:207, PRKCA:5578, MAPK9:5601 & 
AKT1:207, MAPK1:5594, MAP2K1:5604, BCL2:596 & 
2   \\ \hline
PI3K & 
B cell receptor signaling pathway & 
-- & 
-- & 
-- & 
AKT1:207, NFKB1:4790 & 
AKT1:207, MAPK1:5594, MAP2K1:5604 & 
2   \\ \hline
PI3K & 
Toll-like receptor signaling pathway & 
-- & 
-- & 
-- & 
AKT1:207, NFKB1:4790, MAPK9:5601 & 
AKT1:207, MAPK1:5594, MAP2K1:5604 & 
2   \\ \hline
PI3K	 & 
VEGF signaling pathway & 
-- & 
-- & 
-- & 
AKT1:207, PRKCA:5578 & 
AKT1:207, MAPK1:5594, MAP2K1:5604 & 
2   \\ \hline
-- & 
AMPK signaling pathway & 
EIF4EBP1:1978, EEF2K:29904, RPS6KB1:6198, TSC2:7249 & 
-- & 
-- & 
AKT1:207, TSC2:7249 & 
-- & 
2   \\ \hline
-- & 
Natural killer cell mediated cytotoxicity & 
ARAF:369, BID:637 & 
-- & 
-- & 
-- & 
MAPK1:5594, MAP2K1:5604 & 
2   \\ \hline
-- & HTLV-I infection & 
-- & 
NRAS:4893, STAT5A:6776 & 
-- & 
AKT1:207, NFKB1:4790, BAX:581 & 
-- & 
2   \\ \hline
-- & 
Measles & 
-- & 
JAK2:3717, STAT3:6774, STAT5A:6776 & 
-- & 
AKT1:207, NFKB1:4790 & 
-- & 
2   \\ \hline
-- & Herpes simplex infection & -- & JAK2:3717, SRSF1:6426 & -- & NFKB1:4790, MAPK9:5601 & --
& 2   \\ \hline
-- & Adipocytokine signaling pathway & -- & JAK2:3717, STAT3:6774 & -- & AKT1:207, NFKB1:4790, MAPK9:5601 & --
& 2   \\ \hline
-- & Signaling pathways regulating pluripotency of stem cells & -- & JAK2:3717, NRAS:4893, STAT3:6774 & -- & -- & AKT1:207, MAPK1:5594, MAP2K1:5604
& 2   \\ \hline
-- & Regulation of actin cytoskeleton & -- & MYH9:4627, NRAS:4893 & -- & -- & MAPK1:5594, MAP2K1:5604
& 2   \\ \hline
-- & Central carbon metabolism in cancer & -- & NRAS:4893, PTEN:5728 & -- & -- & AKT1:207, MAPK1:5594, MAP2K1:5604
& 2   \\ \hline
-- & 
Epstein-Barr virus infection & 
-- & 
-- & 
-- & 
AKT1:207, NFKB1:4790, MAPK9:5601 & 
AKT1:207, BCL2:596 & 
2   \\ \hline
-- & 
TNF signaling pathway & 
-- & 
-- & 
-- & 
AKT1:207, NFKB1:4790, MAPK9:5601 & 
AKT1:207, MAPK1:5594, MAP2K1:5604 & 
2   \\ \hline
-- & Tuberculosis & -- & -- & -- & AKT1:207, NFKB1:4790, MAPK9:5601, BAX:581 & AKT1:207, MAPK1:5594, BCL2:596 & 
2   \\ \hline
-- & Neurotrophin signaling pathway & -- & -- & -- & AKT1:207, NFKB1:4790, MAPK9:5601, BAX:581 & AKT1:207, MAPK1:5594, MAP2K1:5604, BCL2:596 & 
2   \\ \hline
-- & Ras signaling pathway & -- & -- & -- & AKT1:207, NFKB1:4790, PRKCA:5578, MAPK9:5601 & AKT1:207, MAPK1:5594, MAP2K1:5604 & 
2   \\ \hline
-- & Adrenergic signaling in cardiomyocytes & -- & -- & -- & AKT1:207, PRKCA:5578 & AKT1:207, MAPK1:5594, BCL2:596 & 
2   \\ \hline
-- & Thyroid hormone signaling pathway & -- & -- & -- & AKT1:207, PRKCA:5578, TSC2:7249 & AKT1:207, MAPK1:5594, MAP2K1:5604 &
2   \\ \hline
-- & Progesterone-mediated oocyte maturation & -- & -- & -- & AKT1:207, MAPK9:5601 & AKT1:207, MAPK1:5594, MAP2K1:5604
& 2 \\  \hline
-- & Chagas disease (American trypanosomiasis) & -- & -- & -- & AKT1:207, NFKB1:4790, MAPK9:5601 & AKT1:207, MAPK1:5594
& 2 \\  \hline
-- & Osteoclast differentiation & -- & -- & -- & AKT1:207, NFKB1:4790, MAPK9:5601 & AKT1:207, MAPK1:5594, MAP2K1:5604
& 2 \\  \hline
-- & T cell receptor signaling pathway & -- & -- & -- & AKT1:207, NFKB1:4790, MAPK9:5601 & AKT1:207, MAPK1:5594, MAP2K1:5604
& 2 \\  \hline
-- & cAMP signaling pathway & -- & -- & -- & AKT1:207, NFKB1:4790, MAPK9:5601 & AKT1:207, MAPK1:5594, MAP2K1:5604
& 2 \\  \hline
-- & Influenza A & -- & -- & -- & AKT1:207, NFKB1:4790, PRKCA:5578, MAPK9:5601 & AKT1:207, MAPK1:5594, MAP2K1:5604
& 2 \\  \hline
-- & Fc gamma R-mediated phagocytosis & -- & -- & -- & AKT1:207, PRKCA:5578 & AKT1:207, MAPK1:5594, MAP2K1:5604
& 2 \\  \hline
-- & Rap1 signaling pathway & -- & -- & -- & AKT1:207, PRKCA:5578 & AKT1:207, MAPK1:5594, MAP2K1:5604
& 2 \\  \hline
-- & Dopaminergic synapse & -- & -- & -- & AKT1:207, PRKCA:5578, MAPK9:5601 & AKT1:207, GSK3A:2931
& 2 \\  \hline
-- & Fc epsilon RI signaling pathway & -- & -- & -- & AKT1:207, PRKCA:5578, MAPK9:5601 & AKT1:207, MAPK1:5594, MAP2K1:5604
& 2 \\  \hline
-- & NOD-like receptor signaling pathway & -- & -- & -- & NFKB1:4790, MAPK9:5601 & MAPK1:5594, BCL2:596
& 2 \\  \hline
-- & GnRH signaling pathway & -- & -- & -- & PRKCA:5578, MAPK9:5601 & MAPK1:5594, MAP2K1:5604
& 2 \\  \hline
Cancer & Renal cell carcinoma & -- & -- & -- & -- & AKT1:207, MAPK1:5594, MAP2K1:5604
& 1 \\  \hline
Cancer & Thyroid cancer & -- & -- & -- & -- & MAPK1:5594, MAP2K1:5604
& 1 \\  \hline
PI3K & Cell cycle & ABL1:25, RB1:5925 & -- & -- & -- & --
& 1 \\  \hline
-- & Viral myocarditis & ABL1:25, BID:637 & -- & -- & -- & --
& 1 \\  \hline
-- & Alzheimer's disease & BID:637, CASP7:840 & -- & -- & -- & --
& 1 \\  \hline
-- & Hematopoietic cell lineage & -- & -- & ITGA2:3673, TFRC:7037 & -- & --
& 1 \\  \hline
-- & Phagosome & -- & -- & ITGA2:3673, TFRC:7037 & -- & --
& 1 \\  \hline
-- & Protein processing in endoplasmic reticulum & -- & -- & -- & MAPK9:5601, BAX:581 & --
& 1 \\  \hline
-- & Epithelial cell signaling in Helicobacter pylori infection & -- & -- & -- & NFKB1:4790, MAPK9:5601 & --
& 1 \\  \hline
-- & Pertussis & -- & -- & -- & NFKB1:4790, MAPK9:5601 & --
& 1 \\  \hline
-- & RIG-I-like receptor signaling pathway & -- & -- & -- & NFKB1:4790, MAPK9:5601 & --
& 1 \\  \hline
-- & Salmonella infection & -- & -- & -- & NFKB1:4790, MAPK9:5601 & --
& 1 \\  \hline
-- & Shigellosis & -- & -- & -- & NFKB1:4790, MAPK9:5601 & --
& 1 \\  \hline
-- & Amoebiasis & -- & -- & -- & NFKB1:4790, PRKCA:5578 & --
& 1 \\  \hline
-- & Inflammatory mediator regulation of TRP channels & -- & -- & -- & PRKCA:5578, MAPK9:5601 & --
& 1 \\  \hline
-- & Retrograde endocannabinoid signaling & -- & -- & -- & PRKCA:5578, MAPK9:5601 & --
& 1 \\  \hline
-- & Wnt signaling pathway & -- & -- & -- & PRKCA:5578, MAPK9:5601 & --
& 1 \\  \hline
-- & Platelet activation & -- & -- & -- & -- & AKT1:207, MAPK1:5594
& 1 \\  \hline
-- & cGMP-PKG signaling pathway & -- & -- & -- & -- & AKT1:207, MAPK1:5594, MAP2K1:5604
& 1 \\  \hline
-- & Alcoholism & -- & -- & -- & -- & MAPK1:5594, MAP2K1:5604
& 1 \\  \hline
-- & Dorso-ventral axis formation & -- & -- & -- & -- & MAPK1:5594, MAP2K1:5604
& 1 \\  \hline
-- & Gap junction & -- & -- & -- & -- & MAPK1:5594, MAP2K1:5604
& 1 \\  \hline
-- & Long-term depression & -- & -- & -- & -- & MAPK1:5594, MAP2K1:5604
& 1 \\  \hline
-- & Long-term potentiation & -- & -- & -- & -- & MAPK1:5594, MAP2K1:5604
& 1 \\  \hline
-- & Melanogenesis & -- & -- & -- & -- & MAPK1:5594, MAP2K1:5604
& 1 \\  \hline
-- & Oxytocin signaling pathway & -- & -- & -- & -- & MAPK1:5594, MAP2K1:5604
& 1 \\  \hline
-- & Prion diseases & -- & -- & -- & -- & MAPK1:5594, MAP2K1:5604
& 1 \\  \hline
-- & Serotonergic synapse & -- & -- & -- & -- & MAPK1:5594, MAP2K1:5604
& 1 \\  \hline
-- & Vascular smooth muscle contraction & -- & -- & -- & -- & MAPK1:5594, MAP2K1:5604
& 1 \\  \hline
-- & Estrogen signaling pathway & -- & -- & -- & -- & AKT1:207, MAPK1:5594, MAP2K1:5604
& 1 \\  \hline
-- & Oocyte meiosis & -- & -- & -- & -- & MAPK1:5594, MAP2K1:5604
& 1 \\  \hline
\end{longtable}
\end{landscape}

\end{document}